\documentclass[conference]{IEEEtran}
\IEEEoverridecommandlockouts
\usepackage{cite}
\usepackage{amsmath,amssymb,amsfonts,amsthm}
\usepackage{bm}
\usepackage{algorithmic}
\usepackage{graphicx}
\usepackage{textcomp}
\usepackage{xcolor}
\usepackage{cuted}
\usepackage[font=small,tableposition=top]{caption}
\usepackage[braket]{qcircuit}
\usepackage{lscape}
\usepackage{pgfplotstable}
\usepackage{filecontents}
\usepackage[T1]{fontenc}
\usepackage{booktabs}
\usepackage{longtable}
\usepackage{tabu}
\usepackage{url}
\usepackage{breakurl}
\usepackage[breaklinks]{hyperref}
\usepackage{orcidlink}
\usepackage{thmtools}
\usepackage{fancyhdr}
\usepackage{tcolorbox}
\usepackage{verbatim}
\usepackage{dblfloatfix}
\usepackage{enumitem}

\def\BibTeX{{\rm B\kern-.05em{\sc i\kern-.025em b}\kern-.08em
    T\kern-.1667em\lower.7ex\hbox{E}\kern-.125emX}}

\hypersetup{
  breaklinks=true,
  colorlinks=true,
  linkcolor=blue,
  urlcolor=blue,
  citecolor=blue
}

\makeatletter
\newcommand{\linebreakand}{%
  \end{@IEEEauthorhalign}
  \hfill\mbox{}\par
  \mbox{}\hfill\begin{@IEEEauthorhalign}
}
\makeatother

\newcommand{\beq}[0]{\begin{equation}}
\newcommand{\eeq}[0]{\end{equation}}

\tcbuselibrary{theorems}
\newtcbtheorem[]{mydef}{Definition}%
{colback=blue!5,colframe=blue!35!white,fonttitle=\bfseries}{def}
\newtcbtheorem[number within=section]{mytheo}{Theorem}%
{colback=red!5,colframe=red!35!white,fonttitle=\bfseries}{th}
\newtcbtheorem[number within=section]{mylem}{Lemma}%
{colback=orange!5,colframe=orange!50!white,fonttitle=\bfseries}{th}
\newtcbtheorem[number within=section]{myprop}{Proposition}%
{colback=green!5,colframe=green!80!white,fonttitle=\bfseries}{th}

\newtheorem{theorem}{Theorem}[section]

\newtheorem{corollary}{Corollary}[theorem]

\theoremstyle{definition}

\begin{document}

\title{Qubit-efficient and gate-efficient encodings of graph partitioning problems for quantum optimization
\thanks{TZ, PPA, US, and HAM were funded by a National Sciences and Engineering Research Council of Canada (NSERC) Collaborative Research and Training Experience (CREATE) grant on Quantum Computing, NSERC Alliance Consortium Grant entitled Quantum Software Consortium -- Exploring Distributed Quantum Solutions for Canada (QSC), and NSERC Alliance grant on Quantum Computing for Optimal Mobility. TZ was additionally funded by the Simons Foundation.  VKM was wholly funded by the Simons Foundation. The computations reported in this paper were performed using an allocation on a D-Wave quantum annealer funded by the Simons Foundation. The Flatiron Institute is a division of the Simons Foundation.}
}

\author{
\IEEEauthorblockN{Tristan Zaborniak \orcidlink{0000-0002-4301-0861}}
\IEEEauthorblockA{\textit{Department of Computer Science} \\
\textit{University of Victoria}\\
Victoria, BC, Canada \\
\href{mailto:tristanz@uvic.ca}{tristanz@uvic.ca}}
\and
\IEEEauthorblockN{Prashanti Priya Angara \orcidlink{0000-0002-9660-011X}}
\IEEEauthorblockA{\textit{Department of Computer Science} \\
\textit{University of Victoria}\\
Victoria, BC, Canada \\
\href{mailto:pangara@uvic.ca}{pangara@uvic.ca}}
\and
\IEEEauthorblockN{Vikram Khipple Mulligan \orcidlink{0000-0001-6038-8922}}
\IEEEauthorblockA{\textit{Center for Computational Biology} \\
\textit{Flatiron Institute}\\
New York, NY, United States of America \\
\href{mailto:vmulligan@flatironinstitute.org}{vmulligan@flatironinstitute.org}}
\linebreakand
\IEEEauthorblockN{Hausi A. M\"uller \orcidlink{0000-0002-1004-5830}}
\IEEEauthorblockA{\textit{Department of Computer Science} \\
\textit{University of Victoria}\\
Victoria, BC, Canada \\
\href{mailto:hausi@uvic.ca}{hausi@uvic.ca}}
\and
\IEEEauthorblockN{Ulrike Stege \orcidlink{0000-0001-9466-7196}}
\IEEEauthorblockA{\textit{Department of Computer Science} \\
\textit{University of Victoria}\\
Victoria, BC, Canada \\
\href{mailto:ustege@uvic.ca}{ustege@uvic.ca}}
}

\maketitle

 \begin{abstract}
We introduce a qubit- and gate-efficient higher-order unconstrained binary optimization (HUBO) encoding for graph partitioning problems requiring label-count minimization. This widely applicable class of problems includes minimum graph coloring, minimum \textit{k}-cut, and community detection. To the best of our knowledge, this is the first work to address the optimization versions of these problems in a quantum setting, rather than only their decision counterparts. Our construction encodes each vertex variable in a number of bits logarithmic in the number of available labels---the information-theoretic minimum---and employs a novel lexicographic penalty system that implicitly minimizes partition count without requiring dedicated indicator variables. We derive provably sufficient conditions on all penalty coefficients, including those arising from Rosenberg quadratization, guaranteeing feasibility and optimality of the lowest-energy solution. Analogous conditions are derived for a one-hot encoding to enable controlled comparison. We also show that our encoding reduces two-qubit gate count per layer of the quantum approximate optimization algorithm (QAOA). Benchmarking on a quantum annealer demonstrates that our logarithmic encoding greatly improves solution quality and time-to-solution for minimum graph coloring relative to one-hot encoding, with greater advantage as problem size increases.
 \end{abstract}

\begin{IEEEkeywords}
graph partitioning, graph coloring, quantum optimization, combinatorial optimization, logarithmic encoding, QUBO, HUBO, penalty methods, quantum annealing, QAOA
\end{IEEEkeywords}

\section{Introduction}\label{section:introduction}

Solving $\mathcal{NP}$-hard combinatorial optimization problems (COPs) using quantum computers has attracted considerable interest, as approximation advantages versus classical techniques are achievable in principle \cite{superpolynomial, Hartung2025} and have been observed in practice \cite{MunozBauza2025, Shaydulin2024}. Realizing these advantages, however, critically depends on how efficiently such problems are encoded for execution on quantum hardware. In particular, it is crucial to minimize qubit requirements and gate complexity across hardware generations, as these factors directly impact error rates, quantum error correction overhead, and, ultimately, the feasibility of solving practical problems at scale \cite{preskill2018, fowler2012surface}.

Here, we focus on reducing the qubit and gate requirements for a subset of COPs whose variables take symmetric labels in a finite set of size $k>2$. Such higher-arity variables arise naturally in labeling, clustering, and partitioning tasks, and their structure poses distinctive encoding challenges for quantum optimization, particularly when the cardinality of the assigned label set is itself to be minimized. A canonical example is \textsc{Minimum Graph Coloring}, in which each vertex is assigned exactly one color, with the objective of minimizing the number of distinct colors used while ensuring that adjacent vertices receive different colors \cite{jensen2011graph}. Many problems---including community detection \cite{Fortunato} and graph cutting \cite{Calinescu2008}---share the same underlying form: each vertex $v$ corresponds to a discrete variable $\ell_v \in \{0,\dots,k-1\}$ for some upper bound $k$, and the cost function depends on equality or inequality relations between the labels of adjacent vertices.

Efficiently encoding such $k$-valued variables in binary form to allow their compatibility with quantum optimization algorithms is nontrivial. Quantum optimization schemes typically operate on binary degrees of freedom \cite{kadowaki1998quantum, farhi2014quantum, Motta2019, Gilliam2021, nn2025}, necessitating a mapping from $k$-ary choices to bitstrings. The most common such mapping is the one-hot encoding \cite{Lucas2014}, which represents each label with $k$ indicator bits subject to an exclusivity constraint. While amenable to quadratic penalty formulations, one-hot encoding incurs a linear overhead in $k$ for every vertex. This quickly becomes a bottleneck, particularly on near-term hardware: a graph with $|V|$ vertices and $k$ allowed labels requires at least $|V|\cdot k$ binary variables (or qubits). Moreover, one-hot encoding introduces infeasible states and the need for carefully tuned penalty parameters \cite{Verma2022}.

As a result, the search for qubit-efficient and gate-efficient encodings of $k$-valued decision variables is ongoing.\footnote{By \textit{qubit-efficient}, we mean that each $k$-valued decision variable is encoded using $\lceil \log_2 k \rceil$ qubits, the information-theoretic minimum. By \textit{gate-efficient}, we mean, informally, circuit implementations requiring asymptotically fewer two-qubit gates per variational layer or oracle call than one-hot encodings.} Alternatives to one-hot encodings include domain-wall encodings \cite{chancellor2019domain}, compact encodings using $\lceil \log_2 k\rceil$ qubits per $k$-ary variable \cite{tabi_graph_coloring, glos_space_efficient, Zaborniak2025}, compact encodings using $\lceil\log_2 m\rceil$ qubits for binary optimization functions expressed natively in terms of $m$ bits \cite{Tan2021, Huber2024, Leonidas2024, Perelshtein2023}, compact encodings using between $\lceil\log_2 m\rceil$ and $m$ qubits for binary optimization functions expressed natively in terms of $m$ bits \cite{Sciorilli2025}, compressed Hilbert-space representations \cite{shirai_constrained_compression}, and other problem- and algorithm-aware constructions that trade qubit count for interaction locality, order, and/or query complexity \cite{sano2024accelerating, Chatterjee2024}. 

Each of these alternative encodings reduce qubit overhead, but suffer from one or more of the following problems: (i) introduction of many higher-order interactions, (ii) more complex feasibility constraints, (iii) considerably increased circuit depth, (iv) considerably increased sampling costs, (v) introduction of error to the objective function, and (vi) non-trivial adaptation or non-applicability to graph partitioning problems with minimum partition count objectives. In fact, only Ref.~\cite{tabi_graph_coloring} addresses a graph partitioning problem directly, treating the \textit{decision} version of graph coloring rather than the \textit{optimization} version that seeks to minimize the number of colors used. Taken together, these limitations highlight that the interplay among encoding size, feasible-state geometry, and penalty calibration is central to designing scalable optimization models for $k$-valued graph problems, including graph partitioning. We examine these trade-offs in Section \ref{section:background}.

Here, we introduce a qubit- and gate-efficient HUBO formulation for label-symmetric graph partitioning problems whose objective includes an explicit preference for fewer labels. To the best of our knowledge, our work is the first to treat such problems. Our construction assigns $\lceil\log_2 k\rceil$ bits per vertex and employs a novel lexicographic penalty system that imposes a strict hierarchy among label indices. This enforces a structured ordering that guarantees that feasible solutions favor smaller labels, implicitly capturing the minimum partition count objective without requiring separate indicator variables. The resulting HUBO admits quadratization with explicit bounds on auxiliary variables, making the approach compatible with annealing-based and gate-based quantum optimizers. We provide a comprehensive analysis of our formulation, including:

\begin{enumerate}[topsep=2pt,itemsep=1pt,parsep=0pt,partopsep=0pt]
    \item[\textnormal{(i)}] Explicit Hamiltonians for general label-symmetric, pairwise-decomposable graph partitioning problems with minimum partition count objectives;
    \item[\textnormal{(ii)}] Theoretical scaling analysis of qubit count, gate count, and penalty weights, including under Rosenberg order reduction; and
    \item[\textnormal{(iii)}] Benchmarking results demonstrating the superiority of our approach over one-hot encoding for \textsc{Minimum Graph Coloring} on a quantum annealer.
\end{enumerate}

\noindent The resulting framework applies broadly across this class of graph partitioning problems and across existing quantum optimization schemes and hardware.

\section{Background}\label{section:background}

We now discuss: (i) current approaches to qubit count reduction when encoding COPs, (ii) current quantum combinatorial optimization algorithms compatible with our encodings, and (iii) the class of problems that we treat explicitly in this paper.

\subsection{Qubit-Efficient Encodings of COPs}

The standard approach to encoding COPs on quantum hardware employs one-hot encoding \cite{Lucas2014}. This encoding requires $k$ qubits for each of $|V|$ discrete, $k$-ary variables, leading to quantum circuits with at least $|V|\cdot k$ qubits. As noted above, there has been effort to develop more qubit-efficient encodings that reduce this overhead, often at the cost of increased circuit depth, higher-order interactions, or more complex feasibility constraints. Here we briefly review these approaches, highlighting their trade-offs and practical limitations.

\paragraph{Domain-Wall Encoding} Domain-wall encoding represents a $k$-valued variable using $k-1$ qubits arranged such that the bit string $1^{\ell}0^{k-1-\ell}$ corresponds to value $\ell$ \cite{chancellor2019domain}. This encoding offers a modest qubit reduction over one-hot encoding (from $k$ to $k-1$ qubits per variable), and a modest reduction in the number of infeasible states. Importantly, as for one-hot encodings, constraints such as equality or inequality between two domain-wall variables can be expressed using only quadratic terms \cite{chancellor2019domain, Berwald2023}. The primary limitation of domain-wall encoding is therefore that the qubit reduction is minimal. Consequently, it does not provide the exponential or polynomial compression necessary to tackle large-scale instances on hardware-limited quantum processors \cite{chen2021performance}.

\paragraph{Logarithmic Encodings} A more aggressive compression strategy encodes each $k$-valued variable into $\lceil \log_2 k \rceil$ qubits using a binary representation \cite{tabi_graph_coloring, sano2024accelerating, glos_space_efficient}. This exponential reduction in qubit count is attractive, but produces higher-order objective functions. For example, when \textsc{Graph Coloring} is encoded in this manner, the resulting HUBO objective function has a maximum degree of $2\lceil \log_2 k \rceil$ \cite{glos_space_efficient}. Quadratizing these higher-order terms \textit{via} standard techniques introduces $\mathcal{O}(\lceil \log_2 k \rceil)$ auxiliary variables per constraint.

A related approach designed to avoid higher-order overhead compresses $k$-ary variables into $\lceil \log_2 k \rceil$ qubits by fitting the original cost function to QUBO form \textit{via} overdetermined least-squares \cite{Zaborniak2025}. While this achieves exponential qubit reduction, the encoding is lossy: the resulting objective function is an approximation of the original, with approximation error that grows with the degree of compression.

We also mention a critical limitation of logarithmic encodings when applied to $m$ \textit{binary} decision variables (applicable to $k$-ary problems only after encoding to binary variables) \cite{Tan2021, Huber2024, Leonidas2024, Perelshtein2023}. Specifically, such encodings are classically tractable when variational circuit depth remains polynomial in the number of qubits \cite{Tan2021, Sciorilli2025}. Moreover, such minimal encodings strongly limit the expressivity of the model\cite{Sciorilli2025}.

\paragraph{Polynomial Encodings} A recent approach by Sciorilli \textit{et al.} \cite{Sciorilli2025} encodes $m = \mathcal{O}(n^k)$ binary decision variables into correlations of $k$-body Pauli operators across $n$ qubits, achieving polynomial compression. This compression is achieved by defining each variable $x_i = \mathrm{sign}(\langle \Pi_i \rangle)$, where $\Pi_i$ is a traceless Pauli string. The quantum circuit is trained to minimize a non-linear loss function of the Pauli expectation values. This encoding offers several advantages: the circuit depth scales sublinearly in $n$, the number of parameters grows approximately linearly with $m$, and the scheme super-polynomial mitigates barren plateaus \cite{Sciorilli2025}.

However, while the encoding requires only three measurement settings ($X^{\otimes k}$, $Y^{\otimes k}$, $Z^{\otimes k}$), the measurement budget can be substantial (as is true of the approach in Ref.~\cite{Chatterjee2024}): worst-case bounds indicate $\mathcal{O}(m^3)$ measurements are required to estimate the loss function at each stage. Second, the method requires careful hyperparameter tuning. Finally, while the scheme is claimed to be classically intractable for sufficiently large $|V|$, the crossover point where quantum advantage emerges remains empirically unclear.

\subsection{Quantum Optimization Algorithms}

Quantum algorithms for combinatorial optimization that are compatible with our encoding include variational quantum computing (VQC) \cite{cerezo2021variational}, quantum imaginary time evolution (QITE) \cite{Motta2019}, Grover-based adaptive search (GAS) \cite{Gilliam2021, Ominato2024}, and adiabatic quantum computation (AQC)/quantum annealing (QA) \cite{mcgeoch2022adiabatic}. We summarize these methods, and briefly comment on their resource use when applied QUBO and HUBO models.

\paragraph{Variational Quantum Computing} VQC minimizes $\langle\psi(\boldsymbol{\theta})|H|\psi(\boldsymbol{\theta})\rangle\geq E_0$ by classically adjusting circuit parameters $\boldsymbol{\theta}$ \cite{cerezo2021variational}. QUBOs require only $2$-qubit gates; $k$-local HUBO terms are realized \textit{via} phase gadgets using $2(k{-}1)$ CNOTs and one $R_Z$ rotation per interaction \cite{cowtan2019phase, farhi2014quantum}.

\paragraph{Quantum Imaginary Time Evolution} QITE projects toward the ground state by approximating non-unitary imaginary-time dynamics with local unitaries \cite{Motta2019}. Resource scaling depends on estimating $O(n^k)$ correlators in the worst case for a $k$-local Hamiltonian, though locality and commutation relations can reduce this cost \cite{mcardle2019variational}. 

\paragraph{Grover Adaptive Search} GAS iteratively applies Grover amplification with thresholds set by the current best objective value \cite{Grover1996, Gilliam2021}. The oracle circuit depth scales as $O(n^2)$ for QUBOs and $O(n^k)$ for degree-$k$ HUBOs. Effectiveness is constrained by oracle circuit depth, which increases rapidly with cost function degree \cite{Gilliam2021}.

\paragraph{Adiabatic Quantum Computing / Quantum Annealing} AQC interpolates from an initial Hamiltonian $H_0$ to a problem Hamiltonian $H_1$ whose ground state encodes the optimum \cite{mcgeoch2022adiabatic, kadowaki1998quantum}. Practical devices typically support only quadratic Ising models, requiring higher-order problems to be reduced to quadratic order before execution. Minimizing qubit count is especially important here, as the inverse spectral gap---which is proportional to Time-to-Solution---typically grows exponentially with qubit number \cite{Morita2008}.

\subsection{Graph Partitioning Problems}\label{sec:graph_paritioning_problems}

The specifc COPs we treat in this work have the following properties. Given an undirected, simple graph $G=(V,E)$:

\begin{enumerate}[topsep=2pt,itemsep=1pt,parsep=0pt,partopsep=0pt]
    \item[\textnormal{(i)}] \textbf{Vertex labelling.} Each vertex $v\in V$ must be assigned a label $\ell_v\in\{0,1,\dots,k-1\}$ (\textit{i.e.}, $[k]$), where $k\leq|V|$.
    \item[\textnormal{(ii)}] \textbf{Pairwise decomposability.} There exists a pairwise cost or constraint function $\phi_{u,v}(\ell_u,\ell_v)$ for each edge $(u,v)\in E$ such that $\phi_{u,v}(\ell_u,\ell_v)=\phi_{u,v}(\mathbf{1}[\ell_u=\ell_v])$, where $\mathbf{1}[\cdot]$ is the indicator function.
    \item[\textnormal{(iii)}] \textbf{Label symmetry.} The labels are exchangeable; \textit{i.e.}, there is no inherent meaning to label $0$ versus label $1$, besides the fact that they are different from one another. 
    \item[\textnormal{(iv)}] \textbf{Minimum partition count.} The objective is to minimize the number of distinct labels assigned to vertices, assigning only one label per vertex.
\end{enumerate}

\noindent We therefore call these problems \textit{label-symmetric, pairwise-decomposable graph partitioning problems with minimization of partition count}. The general form of these problems seeks to minimize the following objective:

\begin{equation}\label{equation:general_graph_partitioning_problem}
    \min_{\ell:V\rightarrow [k]}\sum_{(u,v) \in E} \phi_{u,v}(\mathbf{1}[\ell_u= \ell_v]) + \lambda\cdot|\{\ell_v : v \in V\}|
\end{equation}

\noindent where $|\{\ell_v:v\in V\}|$ represents the number of distinct labels assigned to the full set of vertices, and the form of $\phi_{u,v}$ is such that we can write it as follows:

\begin{equation}
    \phi_{u,v}(\mathbf{1}[\ell_u= \ell_v])=\alpha_{u,v}\cdot\mathbf{1}[\ell_u= \ell_v]+\beta_{u,v}\cdot(\mathbf{1}[\ell_u\neq \ell_v])
\end{equation}

\noindent where $\alpha_{u,v}$ and $\beta_{u,v}$ encode \textit{local} properties between vertices that do not depend on label assignments. For example, in the case of \textsc{Minimum Graph Coloring}, $\beta_{u,v}=0\;\forall\;(u,v)\in E$, and $\alpha_{u,v}=1$ when $(u,v)\in E$ and $\alpha_{u,v}=0$ otherwise.

\section{Formulating \textsc{Minimum Graph Coloring} using a Logarithmic Number of Bits}\label{section:min_graph_coloring}

In this section, we develop our formulation for \textsc{Minimum Graph Coloring}, deferring the general case to Section \ref{section:treating_the_general_case}. We present both a one-hot QUBO and our logarithmic HUBO formulation, deriving sufficient conditions on all penalty coefficients---including those from Rosenberg quadratization---to guarantee feasibility of the lowest-energy solution. We also compare the two-qubit gate counts of the resulting Hamiltonians.

\begin{mydef}[]{\textsc{Minimum Graph Coloring}}{minimum_graph_coloring}
\vspace{-5mm}
    \begin{align*}
        &\text{Input:} \qquad G=(V,E) \text{~with~} |V| = n \text{~and~} |E| = m  \\
        &\text{Minimize:} \ |C| \\
        &\text{Where:} \quad\;
        \begin{cases}
            & C \text{~is a family of subsets of~} V, \text{~with:}\\
            &(1)\;\bigcup\limits_{V_i\in C} V_i = V\\
            &(2)\;\;\forall\; V_i,V_j \in C, i\not= j, V_i\cap V_j = \emptyset \\
            &(3)\;\;\forall\; (u,v)\in E, \text{if }\exists\, V_i\in C\\&\quad\;\; \text{~with~} u\in V_i,\mbox{~then~} v\notin V_i
        \end{cases}
    \end{align*}
    We call any family $C$ of subsets of $V$, which satisfies conditions (1)--(3), a \text{coloring} of $G$.
\end{mydef}

We begin with an arbitrary undirected, simple graph $G=(V,E)$. The \textsc{Minimum Graph Coloring} problem is then defined as in Definition \ref{def:minimum_graph_coloring}. See Figure \ref{fig:encodings} for an example of a graph hosting a proper coloring and an improper coloring.

\begin{figure}
\begin{center}
    \includegraphics[width=8.5cm]{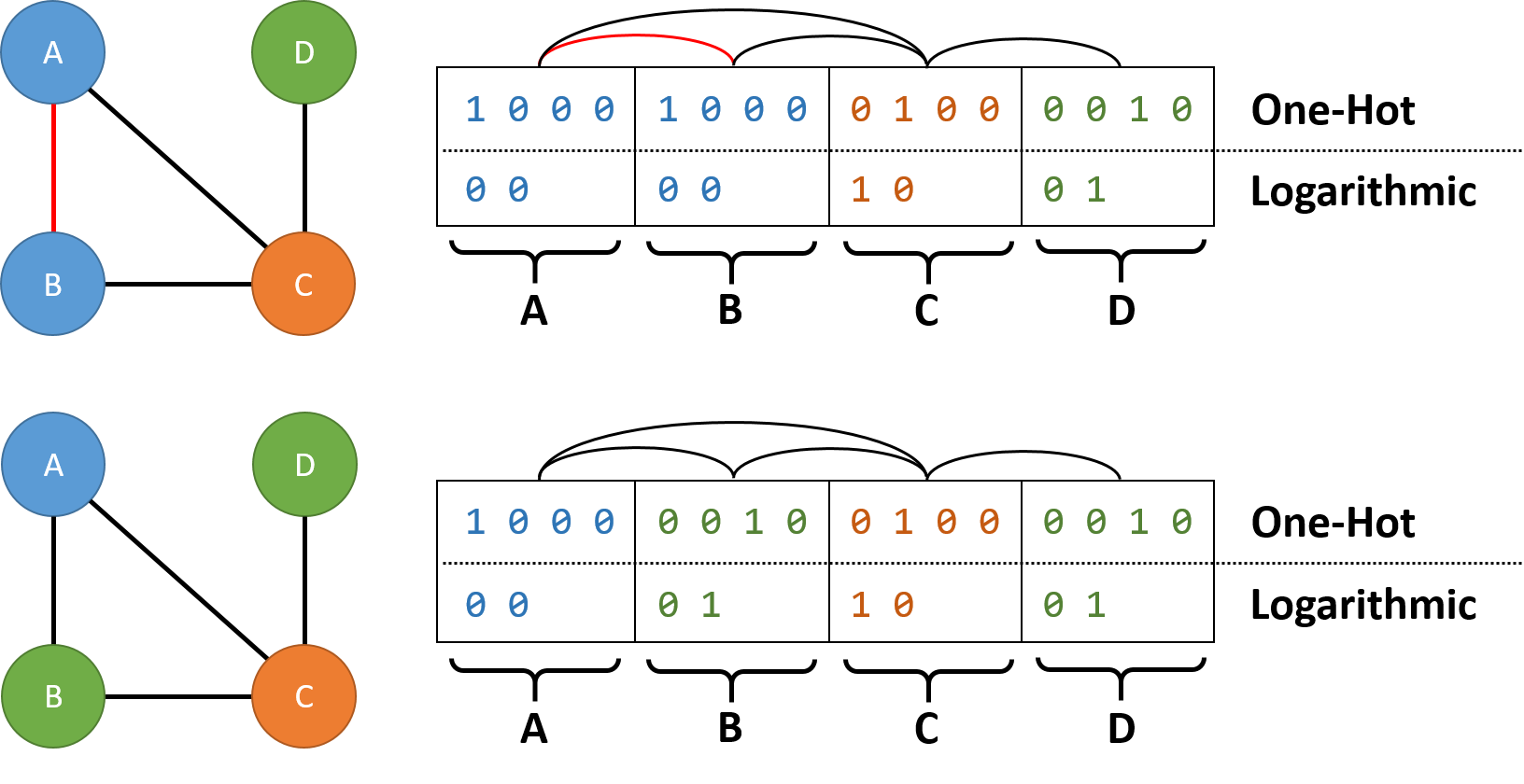}
    \caption{Small example graph colored improperly (upper row, violating edge colored red) and properly (bottom row). The tables contain bitstring assignments to each vertex under the one-hot and logarithmic encodings. Note that with $4$ ($|V|$) vertices, $4$ ($|V|$) qubits are required per vertex for one-hot encoding, and $2$ ($\lceil\log_2|V|\rceil$) qubits are required for logarithmic encoding. While the bitstrings are colored for visualization purposes, they do not refer to a \textit{particular} color.}
    \label{fig:encodings}
\end{center}
\end{figure}

\subsection{One-Hot \textsc{Minimum Graph Coloring} QUBO Encoding}\label{section:one_hot_formulation}

In Lucas, $2014$ \cite{Lucas2014}, a QUBO model for the decision problem, \textsc{Graph Coloring} (GC), is presented, which asks when given a graph $G$ and an integer $C_\textnormal{num}=|C|>0$ \textit{if} $G$ can have its vertices colored---using no more than $C_\textnormal{num}$ different colors---in such a way that no two vertices sharing an edge are colored identically. The QUBO model is as follows:

\begin{equation}\label{equation:min_graph_coloring_qubo}
H^\textnormal{one-hot}_\text{GC}(\mathbf{x}) = H_\textnormal{one-hot}(\mathbf{x}) + H_\textnormal{adjacency}(\mathbf{x})
\end{equation}

\noindent where:

\begin{align}\label{equation:min_graph_coloring_HV}
H_\textnormal{one-hot}(\mathbf{x}) 
&=A_\textnormal{one-hot}
\sum_v\left(1-\sum_{c=1}^{C_\textnormal{num}} x_{v, c}\right)^2 \\
 H_\textnormal{adjacency}(\mathbf{x}) 
&=A_\textnormal{adjacency}    \sum_{(u, v) \in E} \sum_{c=1}^{C_\textnormal{num}} x_{u, c} x_{v, c}
\end{align}

\noindent Here, $x_{v,c}$ is a binary variable which is $1$ if vertex $v$ is colored with color $c$, and $0$ otherwise. $A_\textnormal{one-hot}$ and $A_\textnormal{adjacency}$ are penalties that must be set to ensure that the lowest-energy solution is a coloring. $H_\textnormal{one-hot}$ uses the one-hot encoding $\sum_{c=1}^{C_\textnormal{num}} x_{v, c}$ and encourages that each vertex is colored only once. $H_\textnormal{adjacency}$ penalizes the case where two vertices share an edge that are colored identically. If we are testing $C_\textnormal{num}$ colors, this formulation requires $|V|\cdot C_\textnormal{num}$ binary variables.

In order to express \textsc{Minimum Graph Coloring} (MGC) as a QUBO, we set  $C_\textnormal{num}=C_\textnormal{num}^{u}$\footnote{Here, $C_\textnormal{num}^u \leq |V|$ denotes an efficiently-computable upper bound for $C_\textnormal{num}$, such as the one obtained by Brook's theorem \cite{Brooks1941}.} and add an additional register $\mathbf{y}$ of $C_\textnormal{num}$ binary variables  whose purpose is to indicate the number of colors used using one {\it indicator} variable per color. The function to minimize is then:

\begin{equation}\label{eqn:one-hot-MGC}
 H^\textnormal{one-hot}_\text{MGC}(\mathbf{x}, \mathbf{y}) =  H_\text{GC}(\mathbf{x}) + H_\text{count}(\mathbf{y}) + H_\text{link}(\mathbf{x},\mathbf{y})
\end{equation} 

\noindent where:

\begin{align}
    H_\text{count}(\mathbf{y}) & = \sum_{c=1}^{C_\textnormal{num}}y_c \label{equation:H-count} \\
    H_\text{link}(\mathbf{x},\mathbf{y}) & =A_\textnormal{link}\sum_v\sum_{c=1}^{C_\textnormal{num}}x_{v,c}(1-y_c) \label{equation:H-link}
\end{align}

\noindent $H_\text{count}$ is responsible for ensuring that the number of colors used is minimized under the given constraints.  $H_\text{link}$ enforces that if vertex $v$ is colored with color $c$, then $y_c$ is set to $1$. This QUBO formulation requires $(|V|+1)\cdot C_\textnormal{num}$ bits, and three independent penalties.

\subsection{Sufficient Conditions for One-Hot Formulation Penalties}\label{section:onehot_penalties}

In Equations \eqref{equation:min_graph_coloring_qubo}--\eqref{equation:H-link}, $H_\textnormal{one-hot}$ and $H_\textnormal{adjacency}$ encode the hard problem constraints, $H_\text{count}$ is the soft objective (corresponding to color count minimization), and $H_\text{link}$ links the color-usage indicators $y_c$ to the vertex assignments $x_{v,c}$.

We now present sufficient conditions on the penalty coefficients $A_\textnormal{one-hot}, A_\textnormal{adjacency}, A_\textnormal{link} > 0$ that guarantee the following properties for every lowest-energy solution $(\mathbf{x}^*, \mathbf{y}^*)$ of $H^\textnormal{one-hot}_\text{MGC}$:

\begin{enumerate}[topsep=2pt,itemsep=1pt,parsep=0pt,partopsep=0pt]
    \item[\textnormal{(i)}] \textbf{Indicator faithfulness.} Indicator variables faithfully track color usage: $y^*_c = 1$ if and only if $\sum_v x^*_{v,c} \geq 1$. 
    \item[\textnormal{(ii)}] \textbf{Proper coloring.} Edge-sharing vertices are colored differently: $x^*_{u,c}\,x^*_{v,c} = 0$ for all $(u,v) \in E$ and all $c$.
    \item[\textnormal{(iii)}] \textbf{One-hot constraint.} Each vertex is assigned exactly one color; \textit{i.e.}, $\sum_{c} x^*_{v,c} = 1$ for all $v$.
    \item[\textnormal{(iv)}] \textbf{Color-count minimality.} Among all solutions satisfying (i)--(iii), the number of distinct colors is minimized.
\end{enumerate}

\begin{theorem}[Sufficient One-Hot QUBO Penalties for \textsc{Minimum Graph Coloring}]\label{theorem:onehot_penalties}

Suppose the penalty coefficients satisfy the following system of inequalities:

\begin{align}
    A_\textnormal{link} &> 1 \label{equation:A4-bound} \\
    A_\textnormal{adjacency} &> A_\textnormal{link}\cdot C_\textnormal{num} \label{equation:A2-bound} \\
    A_\textnormal{one-hot} &> A_\textnormal{adjacency}\cdot|E| + A_\textnormal{link}\cdot C_\textnormal{num} \label{equation:A1-bound}
\end{align}

\noindent Then every lowest-energy solution $(\mathbf{x}^*, \mathbf{y}^*)$ of $H_\textnormal{MGC}(\mathbf{x}, \mathbf{y})$ satisfies properties \textnormal{(i)--(iv)}.
\end{theorem}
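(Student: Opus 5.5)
The plan is to combine a single global energy comparison against an explicit feasible reference state with a local bit-flip argument for the indicator register. Throughout I assume, as the footnote on $C_\textnormal{num}^u$ intends, that $C_\textnormal{num}\geq\chi(G)$, so that $G$ admits a proper coloring with at most $C_\textnormal{num}$ colors. Note that all three terms $H_\textnormal{one-hot}$, $H_\textnormal{adjacency}$ and $H_\textnormal{link}$ are nonnegative, and $H_\text{count}\geq 0$.

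\emph{Step 1 (reference energy).} Fix a proper coloring using $q\leq C_\textnormal{num}$ colors. Encode it one-hot in $\mathbf{x}$, and set $y_c=1$ exactly for the used colors. In this state every penalty term vanishes, so its energy is $q\leq C_\textnormal{num}$. Hence every lowest-energy solution $(\mathbf{x}^*,\mathbf{y}^*)$ has energy at most $C_\textnormal{num}$.

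\emph{Step 2 (hard constraints (ii) and (iii)).} If $\mathbf{x}^*$ violates the one-hot constraint at some vertex, then $H_\textnormal{one-hot}\geq A_\textnormal{one-hot}$. Since $A_\textnormal{one-hot}>A_\textnormal{link}C_\textnormal{num}>C_\textnormal{num}$ by \eqref{equation:A4-bound} and \eqref{equation:A1-bound}, this contradicts Step 1. Likewise, any monochromatic edge forces $H_\textnormal{adjacency}\geq A_\textnormal{adjacency}$. By \eqref{equation:A2-bound}, $A_\textnormal{adjacency}>A_\textnormal{link}C_\textnormal{num}>C_\textnormal{num}$, again a contradiction. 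So (ii) and (iii) hold. The stated bounds are stronger than this comparison needs. If a local argument is preferred, they also suffice for one. For a vertex with no color, setting one $x_{v,c}=1$ gains $A_\textnormal{one-hot}$ and costs at most $|E|A_\textnormal{adjacency}+A_\textnormal{link}$. For a vertex with $s\geq2$ colors, dropping one gains $A_\textnormal{one-hot}(2s-3)>0$ and increases no other term. I would mention this only as a remark.

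\emph{Step 3 (faithfulness (i)).} Use single-bit flips of $\mathbf{y}$ at a minimizer. Suppose $y^*_c=1$ but no vertex uses $c$. Flipping $y_c\to0$ leaves $H_\text{link}$ unchanged and lowers $H_\text{count}$ by $1$, contradicting minimality. Suppose instead $y^*_c=0$ but $t\geq1$ vertices use $c$. Flipping $y_c\to1$ changes the energy by $1-tA_\textnormal{link}<0$ by \eqref{equation:A4-bound}, again a contradiction.

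\emph{Step 4 (minimality (iv)).} On the set of states satisfying (i)--(iii), all penalty terms vanish. The energy therefore equals $\sum_c y_c$, which by (i) is the number of distinct colors used. Every state outside this set has strictly higher energy than the reference, by Steps 2--3. The global minimizer therefore minimizes the color count among proper colorings with at most $C_\textnormal{num}$ colors, which is $\chi(G)$ under the standing assumption.

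The main obstacle is the existence assumption in Step 1. Everything hinges on $C_\textnormal{num}$ being a genuine upper bound on $\chi(G)$. If $C_\textnormal{num}<\chi(G)$, no penalty choice can yield (ii) and (iii) simultaneously. I would therefore state this hypothesis explicitly, justified by a bound such as Brooks' theorem or $\Delta+1$. I would also observe that the stated inequalities \eqref{equation:A4-bound}--\eqref{equation:A1-bound} are sufficient but not tight: the argument only uses $A_\textnormal{link}>1$ and $A_\textnormal{one-hot},A_\textnormal{adjacency}>C_\textnormal{num}$.
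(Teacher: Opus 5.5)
Your proof is correct, but it reaches (ii) and (iii) by a different route from the paper.

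\emph{The paper's route.}
\begin{itemize}
\item For (i), it uses the same single-bit $y$-flips as you, though only for states already assumed to satisfy (iii).
\item For (iii), it compares against an \emph{arbitrary} one-hot state with faithful indicators. That state's energy is at most $A_\textnormal{adjacency}|E|+C_\textnormal{num}$ whether or not it is a proper coloring. This is where the $A_\textnormal{adjacency}\cdot|E|$ term in the bound on $A_\textnormal{one-hot}$ comes from.
\item For (ii), it uses a local move: recolor one endpoint of a monochromatic edge to an unused color, which changes the energy by at most $1-A_\textnormal{adjacency}<0$.
\end{itemize}

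\emph{Comparison.} The paper's (iii) step does not need a proper coloring to exist. However, its (ii) step tacitly assumes an unused color is available. That need not hold when all $C_\textnormal{num}$ colors already appear in the state, so the argument as written has a missing case. Your single comparison against a proper $q$-coloring handles (ii) and (iii) together and has no such gap. The price is the explicit hypothesis $C_\textnormal{num}\ge\chi(G)$. As you note, the theorem is false without it, so it is implicit in the paper too. A side benefit of your route is that it shows the much weaker conditions $A_\textnormal{link}>1$ and $A_\textnormal{one-hot},A_\textnormal{adjacency}>C_\textnormal{num}$ already suffice.

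\emph{One wording fix in Step 4.} The claim that every state outside the set has strictly higher energy than the reference is false for states that violate only (i), unless the reference is taken to be a $\chi(G)$-coloring. You do not need that claim. Steps 2 and 3 already show that every minimizer satisfies (i)--(iii). Property (iv) then follows because on that set the energy equals the number of colors used.
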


\begin{proof}
We prove (i)--(iv) in order.

\smallskip

\noindent\textbf{(i) Indicator faithfulness.}
Fix any $\mathbf{x}$ satisfying (iii) and any color $c_\alpha$. If $c_\alpha$ is used ($\sum_v x_{v,c_\alpha} \geq 1$), setting $y_{c_\alpha}=1$ costs $1$ \textit{via} $H_\text{count}$ but saves at least $A_\textnormal{link} > 1$ in $H_\text{link}$, so $y_{c_\alpha}=1$ is strictly preferred. If $c_\alpha$ is unused, setting $y_{c_\alpha}=0$ saves $1$ in $H_\text{count}$ with no change to $H_\text{link}$. Thus any optimum satisfying (iii) also satisfies (i).

\smallskip

\noindent\textbf{(ii) Proper coloring.}
Suppose $(\mathbf{x}^*,\mathbf{y}^*)$ satisfies (iii) but has a monochromatic edge: $x^*_{u,c_\alpha}=x^*_{v,c_\alpha}=1$ for some $(u,v)\in E$. Recolor $u$ to an unused color $c_\beta$, updating $\mathbf{y}$ to satisfy (i). The one-hot and link terms remain $0$, $H_\text{count}$ increases by at most $1$, and $H_\textnormal{adjacency}$ decreases by at least $A_\textnormal{adjacency}$. The net change is at most $1 - A_\textnormal{adjacency} < 0$ by Inequality \eqref{equation:A2-bound}, contradicting optimality.

\smallskip

\noindent\textbf{(iii) One-hot constraint.}
Suppose $(\mathbf{x}^*,\mathbf{y}^*)$ violates the one-hot constraint for some vertex, so $H_\textnormal{one-hot}(\mathbf{x}^*)\geq A_\textnormal{one-hot}$. Any one-hot-satisfying solution $(\tilde{\mathbf{x}},\tilde{\mathbf{y}})$ with faithful indicators has $H_\textnormal{one-hot}=0$, $H_\text{link}=0$, $H_\textnormal{adjacency} \leq m \cdot A_\textnormal{adjacency}$, and $H_\text{count}\leq C_\textnormal{num}$. Then $H^\textnormal{one-hot}_\text{MGC}(\tilde{\mathbf{x}},\tilde{\mathbf{y}}) \leq A_\textnormal{adjacency}\cdot |E| + C_\textnormal{num} < A_\textnormal{adjacency}\cdot |E| + C_\textnormal{num}\cdot A_\textnormal{link} < A_\textnormal{one-hot} \leq H^\textnormal{one-hot}_\text{MGC}(\mathbf{x}^*,\mathbf{y}^*)$, where the last inequality uses Inequality \eqref{equation:A1-bound}, contradicting optimality.

\smallskip

\noindent\textbf{(iv) Color-count minimality.}
By (i)--(iii), every optimum has $H_\textnormal{one-hot}=H_\textnormal{adjacency}=H_\text{link}=0$, so $H^\textnormal{one-hot}_\text{MGC}=H_\text{count}=|\{c:y_c^*=1\}|$, which is minimized exactly when the number of distinct colors is minimized.
\end{proof}

\begin{corollary}[Practical One-Hot QUBO Penalties for \textsc{Minimum Graph Coloring}]\label{cor:onehot_practical}

A convenient explicit choice satisfying the conditions of Theorem \ref{theorem:onehot_penalties} is:

\begin{align}
  % A_3 &= 1
  % \label{equation:oh-A3} \\
      A_\textnormal{link} &= |V| + 1
  \label{equation:oh-A4} \\
  A_\textnormal{adjacency} &= A_\textnormal{link}\cdot C_\textnormal{num} + 1 = (|V|+1)\cdot C_\textnormal{num} + 1
  \label{equation:oh-A2} \\
  A_\textnormal{one-hot} &= A_\textnormal{adjacency}\cdot (|E| + 1) + A_\textnormal{link}\cdot C_\textnormal{num}
  \label{equation:oh-A1}
\end{align}

\noindent since $A_\textnormal{link} = |V|+1 > 1$ satisfies Inequality \eqref{equation:A4-bound}; $A_\textnormal{adjacency} = A_\textnormal{link}\cdot C_\textnormal{num} + 1 > A_\textnormal{link}\cdot C_\textnormal{num}$ satisfies Inequality \eqref{equation:A2-bound}; and $A_\textnormal{one-hot} = A_\textnormal{adjacency}\cdot (|E|+1) + A_\textnormal{link}\cdot C_\textnormal{num} > A_\textnormal{adjacency}\cdot|E| + A_\textnormal{link}\cdot C_\textnormal{num}$ satisfies Inequality \eqref{equation:A1-bound}.
\end{corollary}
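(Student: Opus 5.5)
The corollary follows by substituting the explicit values into the three strict inequalities of Theorem~\ref{theorem:onehot_penalties}. The proof is a direct verification, so I would simply check each inequality in the order the theorem lists them. Each coefficient is defined in terms of the previous ones, so every check reduces to comparing an expression with itself plus a positive quantity.

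\textbf{Step 1: Inequality~\eqref{equation:A4-bound}.} Any graph has $|V|\geq 1$, so $A_\textnormal{link}=|V|+1\geq 2>1$. Any value exceeding $1$ would do here. The choice $|V|+1$ is only a convenient scale, and its particular value is not needed for the argument.

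\textbf{Step 2: Inequality~\eqref{equation:A2-bound}.} By definition $A_\textnormal{adjacency}=A_\textnormal{link}\cdot C_\textnormal{num}+1$, which exceeds $A_\textnormal{link}\cdot C_\textnormal{num}$ by exactly $1>0$. Substituting $A_\textnormal{link}=|V|+1$ gives the closed form $(|V|+1)\cdot C_\textnormal{num}+1$ stated in Equation~\eqref{equation:oh-A2}.

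\textbf{Step 3: Inequality~\eqref{equation:A1-bound}.} Expand
\[
A_\textnormal{one-hot}=A_\textnormal{adjacency}\cdot|E|+A_\textnormal{link}\cdot C_\textnormal{num}+A_\textnormal{adjacency}.
\]
This exceeds the right-hand side of Inequality~\eqref{equation:A1-bound} by $A_\textnormal{adjacency}$. Since $A_\textnormal{link}>0$ and $C_\textnormal{num}\geq 1$, we have $A_\textnormal{adjacency}\geq 1>0$, so the inequality is strict.

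With all three hypotheses satisfied, Theorem~\ref{theorem:onehot_penalties} applies and gives properties (i)--(iv) for every lowest-energy solution. There is no real obstacle. The only points needing care are that $C_\textnormal{num}\geq 1$ and $|V|\geq 1$, which guarantee the margins are positive, and that the inequalities are checked in dependency order so that each coefficient is already fixed when used. This order is the same hierarchical order in which the penalties were built in the theorem.
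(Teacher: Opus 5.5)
Your proposal is correct and follows the same route as the paper: substitute the explicit values into Inequalities \eqref{equation:A4-bound}, \eqref{equation:A2-bound}, and \eqref{equation:A1-bound} in dependency order, and observe that each has a positive margin ($|V|$, $1$, and $A_\textnormal{adjacency}$, respectively). Your explicit use of $|V|\geq 1$ and of the positivity of $A_\textnormal{adjacency}$ simply states what the paper's one-line justification leaves implicit.
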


\subsection{Logarithmic \textsc{Minimum Graph Coloring} HUBO Encoding}\label{section:hubo_formulation}

Given that $C_\textnormal{num}$ colors can be enumerated using $\lceil\log_2 C_\textnormal{num}\rceil$ bits, we seek a cost function expressed using such a number of bits per vertex. In this way, we can avoid introducing infeasible solutions by way of one-hot encoding.

Focusing for now on the \textit{adjacency condition}---\textit{i.e.}, that no two adjacent vertices may be colored the same---we must penalize the case where bitstrings corresponding to adjacent vertices are identical. We accomplish this using a product of XNORs over edge bits, such that \textsc{Graph Coloring} then reduces to the following Hamiltonian \cite{tabi_graph_coloring}:

\begin{equation}\label{equation:k_coloring_hubo}
    H^\textnormal{logarithmic}_\textnormal{GC} (\mathbf{x})  = \sum_{(u, v) \in E}\prod_{k=1}^{\lceil\log_2 C_\textnormal{num}\rceil}\left(x_{u,k}\odot x_{v,k}\right)
\end{equation}

\noindent where $x_{v,k}$ refers to bit $k$ of vertex $v$, and $x_{u,k}\odot x_{v,k} =2x_{u,k}x_{v,k} - x_{u,k}- x_{v,k} +1$. When the bitstrings of $u$ and $v$ are identical, $\mathbf{x}_{u,k}\odot x_{v,k}=1$ for all $k$, such that the product equals 1. Therefore, valid colorings have scores of zero, and invalid colorings have scores greater than zero. This HUBO for \textsc{Graph Coloring} uses $|V|\cdot\lceil\log_2 C_\textnormal{num}\rceil$ bits, and has a maximum degree of $2\cdot\lceil\log_2{(C_\textnormal{num})}\rceil$.

When we would like to find the \textit{minimum} number of colors required to color some graph $G$, we can introduce $\lceil\log_2 C_\textnormal{num}\rceil$ linear penalty terms, increasingly strong for increasingly-significant bits within bitstrings. That is, if we assign \textit{significance} to the bits of each vertex bitstring, we can prioritize coloring with ``low-significance'' bitstrings, and therefore minimize the number of colors required. 

Designating the least significant bit by $k=1$ and the most significant bit by $k=\lceil\log_2 C_\textnormal{num}\rceil$, and ensuring that $P_1<P_2<\cdots<P_{\lceil\log_2 C_\textnormal{num}\rceil}$ (see Section \ref{section:sufficient_conditions_lex_and_feas} for strict bounds on $\{P_k\}_{k\in[\lceil\log_2 C_\textnormal{num}\rceil]}$), we  have the following \textsc{Minimum Graph Coloring} HUBO model:

\begin{align}\label{equation:min_graph_coloring_hubo}
    H^\textnormal{logarithmic}_\textnormal{MGC}(\mathbf{x})&=\underbrace{A_\textnormal{adjacency}\sum_{(u, v) \in E}\prod_{k=1}^{\lceil\log_2 C_\textnormal{num}\rceil}\left(x_{u,k}\odot x_{v,k}\right)}_{H_\text{adjacency}}\nonumber \\&\quad\;+ \underbrace{\sum_{k=1}^{\lceil\log_2 C_\textnormal{num}\rceil}P_k\sum_{v}x_{v,k}}_{H_\text{lexicographic}}
\end{align}

\noindent where $A_\textnormal{adjacency}$ is a tunable penalty. In the next section, we derive explicit bounds on $\{P_k\}_{k\in[1,...,\log_2 C_\textnormal{num}]}$ to ensure that \textit{lexicographic minimization} is preserved (Definition \ref{def:lexicographic_minimization}).

\begin{mydef}[]{\textsc{Lexicographic Minimization}}{lexicographic_minimization}

Given two ``index-population'' vectors:
\begin{equation*}
    \mathbf{s} = (s_L, s_{L-1}, \ldots, s_1),\quad \mathbf{s}' = (s'_L, s'_{L-1}, \ldots, s'_1)
\end{equation*}

\noindent where $s_k\in\mathbb{Z}^+\cup0$, we say $\mathbf{s} <_{\text{lexicographic}} \mathbf{s}'$ if at the leftmost index $k$ where they differ, we have $s_k < s'_k$. Formally:
\begin{align*}
    \mathbf{s} &<_{\text{lexicographic}} \mathbf{s}' \iff 
\nonumber \\ &\quad\exists k \in \{L, L-1, \ldots, 1\} : 
s_k < s'_k,\; s_j = s'_j\;\forall\; j > k
\end{align*}
\end{mydef}

To \textit{lexicographically minimize} $\mathbf{s}$ in the context of \textsc{Minimum Graph Coloring} means to find the label assignment (coloring) that is minimal under the $<_{\text{lexicographic}}$ ordering, where $s_k=\sum_{v\in V}x_{v,k}$.

\subsection{Sufficient Conditions for Logarithmic Formulation Penalties}\label{section:sufficient_conditions_lex_and_feas}

Fix an integer $L=\lceil\log_2 C_\textnormal{num}\rceil$  to be the \textit{length} of the bitstring associated with each vertex $v\in V$. For each $v$, and each index within a bitstring $k\in\{1,\dots,L\}$, $x_{v,k}\in\{0,1\}$. 

We now present sufficient conditions on the penalty coefficients $A_\textnormal{adjacency}, P_k\,\forall\,k> 0$ that guarantee the following properties for every lowest-energy solution $(\mathbf{x}^*, \mathbf{y}^*)$ of $H_\text{MGC}$:

\begin{enumerate}[topsep=2pt,itemsep=1pt,parsep=0pt,partopsep=0pt]
    \item[\textnormal{(i)}] \textbf{Lexicographic ordering.} Among all feasible solutions, $\mathbf{x}^*$ lexicographically minimizes the vector $(s_L,s_{L-1},\dots,s_1)$.
    \item[\textnormal{(ii)}] \textbf{Feasibility of the optimal solution.} $H_\textnormal{adjacency}(\mathbf{x}^*)=0$; \textit{i.e.}, no edge of $G$ connects two vertices with the same bitstring.
\end{enumerate}

\begin{theorem}[Sufficient Logarithmic HUBO Penalties for \textsc{Minimum Graph Coloring}]\label{theorem:penalties_coloring}

Suppose $P_1,\dots,P_L$ and $A_\textnormal{adjacency}$ satisfy the following two inequalities:

\begin{align}
P_{k+1} &> |V|\cdot\sum_{j=1}^{k} P_j 
,\;\forall\; k=1,\dots,L-1 \label{equation:penalty_sum}\\
A &> |V|\cdot\sum_{k=1}^{L} P_k\label{equation:adjacency}
\end{align}

\noindent Then any lowest-energy solution $\mathbf{x}^*$ of $H^\textnormal{one-hot}_\textnormal{MGC}(\mathbf{x})$ satisfies properties \textnormal{(i)--(ii)}.

\end{theorem}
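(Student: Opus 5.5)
Two observations drive the argument. First, with $s_k(\mathbf{x})=\sum_{v}x_{v,k}\in\{0,\dots,|V|\}$, the lexicographic term is
\[
H_\text{lexicographic}(\mathbf{x})=\sum_{k=1}^{L}P_k s_k(\mathbf{x}).
\]
Inequality \eqref{equation:penalty_sum} makes this weighted sum strictly monotone in the lexicographic order of $(s_L,\dots,s_1)$. Second, the adjacency term takes only the values $0$ or at least $A$ (written $A_\textnormal{adjacency}$), since each edge product is $0$ or $1$. Inequality \eqref{equation:adjacency} makes $A$ exceed the full possible range of $H_\text{lexicographic}$. I will prove (ii) first and then (i), reading the statement as concerning $H^\textnormal{logarithmic}_\textnormal{MGC}$. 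I also assume $C_\textnormal{num}$ is a valid upper bound, so that a feasible coloring exists among the $2^L\ge C_\textnormal{num}$ bitstrings.

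\textbf{Step 1 (lexicographic dominance lemma).} Suppose $\mathbf{s}<_\text{lexicographic}\mathbf{s}'$, and let $k$ be the leftmost index where they differ, so $s'_k\ge s_k+1$. Then
\[
\sum_j P_j(s'_j-s_j)\;\ge\; P_k-\sum_{j<k}P_j\,|s'_j-s_j|\;\ge\; P_k-|V|\sum_{j<k}P_j\;>\;0.
\]
The middle inequality uses $|s'_j-s_j|\le |V|$, and the last uses \eqref{equation:penalty_sum}; for $k=1$ the sum over $j<k$ is empty, and positivity of $P_1$ suffices. Hence $H_\text{lexicographic}$ strictly increases along the lexicographic order. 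Equal vectors give equal values, so the ordering is respected exactly.

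\textbf{Step 2 (feasibility, property (ii)).} Take any feasible $\tilde{\mathbf{x}}$, which exists by the assumption above. Its energy is
\[
H_\text{lexicographic}(\tilde{\mathbf{x}})\le |V|\sum_k P_k < A.
\]
Any infeasible $\mathbf{x}$ has $H_\text{adjacency}(\mathbf{x})\ge A$ and $H_\text{lexicographic}(\mathbf{x})\ge0$, so its energy is at least $A$. Thus every minimizer $\mathbf{x}^*$ has $H_\text{adjacency}(\mathbf{x}^*)=0$.

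\textbf{Step 3 (property (i)).} On the feasible set the energy equals $H_\text{lexicographic}$. By Step 1, a feasible $\mathbf{x}$ whose vector $\mathbf{s}$ is not lexicographically minimal has strictly higher energy than one that is. So $\mathbf{x}^*$ attains the lexicographically minimal $(s_L,\dots,s_1)$.

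\textbf{Main obstacle.} The delicate point is the bound $|s'_j-s_j|\le|V|$ in Step 1, together with the correct handling of the sign of the lower-order differences. This step requires the strict inequality in \eqref{equation:penalty_sum}.
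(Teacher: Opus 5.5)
Your proof is correct. Step 1 is essentially the paper's argument for (i): take the most significant index where the population vectors differ, note that this term moves the energy by at least $P_t$, and bound all lower-order terms by $|V|\sum_{j<t}P_j$.

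The genuine difference is in (ii). The paper argues by local repair. From an infeasible minimizer it changes the bits of one vertex (or a small set) so that at least one violating edge disappears. It then charges the saving of at least $A_\textnormal{adjacency}$ against the full range $|V|\sum_k P_k$ of the lexicographic term. It never justifies that such a repair exists without creating new violations; a vertex whose neighbors already use all $2^L$ labels has no conflict-free label. Indeed (ii) is simply false when no proper coloring with $2^L$ labels exists, e.g.\ $K_5$ with $L=2$.

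Your global comparison avoids this. Every infeasible configuration has energy at least $A$, while any single feasible configuration has energy at most $|V|\sum_k P_k<A$. This isolates exactly the needed hypothesis, $\chi(G)\le 2^L$, which you make explicit; it holds because $C_\textnormal{num}$ is an upper bound on the chromatic number.

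The paper's bound on the lexicographic change is the full range anyway. So nothing is lost by taking the ``repair'' to be a complete proper coloring, and your version is the rigorous form of the same energy accounting. It also shows that $A$ need only exceed the lexicographic energy of one feasible configuration.

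Proving (ii) before (i) also makes explicit a step the paper leaves implicit: on the feasible set the total energy equals $H_\text{lexicographic}$. Your explicit use of $P_1>0$, which the paper also needs (e.g.\ for $t=1$), is appropriate. So is your reading of $H^\textnormal{one-hot}_\textnormal{MGC}$ as a typo for $H^\textnormal{logarithmic}_\textnormal{MGC}$.
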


\begin{proof}
We prove (i)--(ii) in order.

\smallskip

\noindent\textbf{(i) Lexicographic ordering.} Fix any two solutions $\mathbf{x}$ and $\tilde{\mathbf{x}}$ and write their index-population vectors as $(s_L,\dots,s_1)$ and $(s'_L,\dots,s'_1)$, respectively. Let $t$ be the largest index at which these vectors differ:

\begin{equation}
t = \max\{k: s_k\neq s'_k\}
\end{equation}

\noindent Without loss of generality assume $s_t' < s_t$. Then the change in the lexicographic penalty from $\mathbf{x}$ to $\tilde{\mathbf{x}}$ is:

\begin{equation}
\Delta H_\textnormal{lexicographic}= P_t\cdot(s'_t-s_t) + \sum_{j=1}^{t-1} P_j\cdot(s'_j-s_j)
\end{equation}

\noindent By construction, $s'_t-s_t \le -1$, so the first term is at most $-P_t$, and at least $-P_t\cdot(n-1)$. The remaining terms can at worst increase the energy. Then, each $|s'_j-s_j|$ is bounded above by $|V|$, such that:

\begin{equation}
- |V|\cdot\sum_{j=1}^{t-1} P_j\;\le\; \sum_{j=1}^{t-1} P_j\cdot(s'_j-s_j)\;\le \;|V|\cdot\sum_{j=1}^{t-1} P_j
\end{equation}

\noindent Combining these bounds yields:

\begin{align}
\Delta H_\textnormal{lexicographic}&\ge-P_t\cdot(n-1) - |V|\cdot\sum_{j=1}^{t-1} P_j\\ \Delta H_\textnormal{lexicographic} &\le -P_t + |V|\cdot\sum_{j=1}^{t-1} P_j
\end{align}

By Inequality \eqref{equation:penalty_sum} with $i=t-1$ we have $P_t > |V|\cdot\sum_{j=1}^{t-1}P_j$; hence, the right-hand side is strictly negative. Therefore $H_\textnormal{lexicographic}(\tilde{\mathbf{x}})<H_\textnormal{lexicographic}(\mathbf{x})$. Since $t$ was the most significant index at which the two vectors differed, this shows that any decrease in a more significant count $s_t$ always yields a net decrease in lexicographic energy that cannot be compensated by arbitrary increases in less significant counts. Consequently minimizing $H_\textnormal{lexicographic}$ is equivalent to lexicographically minimizing $(s_L,\dots,s_1)$, proving (i).

\smallskip

\noindent\textbf{(ii) Feasibility of the optimal solution.} Suppose for contradiction that a lowest-energy solution $\mathbf{x}^*$ has at least one violating edge; \textit{i.e.}, $H_\textnormal{adjacency}(\mathbf{x}^*)\ge1$. Construct $\tilde{\mathbf{x}}$ by changing the bits at a single vertex (or a small set of vertices) so as to eliminate at least one violating edge while otherwise choosing the new bits to (approximately) minimize the lexicographic term. The decrease in the adjacency contribution from $\mathbf{x}^*$ to $\tilde{\mathbf{x}}$ is at least $A_\textnormal{adjacency}$ (since the number of violating edges is reduced by at least one), while the increase in the lexicographic energy is at most the maximal possible range of the lexicographic term:

\begin{equation}
H_\textnormal{lexicographic}(\tilde{\mathbf{x}})-H_\textnormal{lexicographic}(\mathbf{x}^*) \le |V|\cdot\sum_{k=1}^L P_k
\end{equation}

\noindent because each $s_k$ can change by at most $|V|$ in absolute value.

By Inequality \eqref{equation:adjacency} we have:

\begin{equation}
A_\textnormal{adjacency} > |V|\cdot\sum_{k=1}^L P_k
\end{equation}

\noindent so the net change in total energy satisfies:

\begin{align}
H^\textnormal{logarithmic}_\textnormal{MGC}(\tilde{\mathbf{x}})&-H^\textnormal{logarithmic}_\textnormal{MGC}(\mathbf{x}^*) \le\nonumber \\ & -A_\textnormal{adjacency} + |V|\cdot \sum_{k=1}^L P_k < 0
\end{align}

\noindent Thus $\tilde{\mathbf{x}}$ has strictly smaller total energy than $\mathbf{x}^*$, so $\mathbf{x}^*$ could not have been a lowest-energy solution. This contradiction proves every lowest-energy solution must satisfy $H_\textnormal{adjacency}=0$, \textit{i.e.} be feasible, proving (ii).

\end{proof}

\begin{corollary}[Practical Logarithmic HUBO Penalties for \textsc{Minimum Graph Coloring}]\label{corollary:logarithmic-practical}

A convenient explicit choice that satisfies the conditions of Theorem \ref{theorem:penalties_coloring} is:

\begin{align}
P_k &= (|V|+1)^{k-1},\;\forall\;k=1,\dots,L\\
A_\textnormal{adjacency} &= |V|\cdot\sum_{k=1}^L P_k + 1
\end{align}

\noindent For this, $\sum_{j=1}^{k}P_j \le (|V|+1)^{k}/n$. Hence $P_{k+1}=(|V|+1)^{k} > |V|\cdot\sum_{j=1}^{k}P_j$, and the bound on $A_\textnormal{adjacency}$ is immediate.
\end{corollary}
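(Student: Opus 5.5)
The corollary has two parts: verify Inequality \eqref{equation:penalty_sum} for the geometric choice $P_k=(|V|+1)^{k-1}$, and then note that Inequality \eqref{equation:adjacency} holds for $A_\textnormal{adjacency}$. Once both are checked, Theorem \ref{theorem:penalties_coloring} gives the conclusion directly. Write $n=|V|$ and $q=n+1$ throughout.

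\textbf{Step 1: closed form for the partial sums.} The partial sums of the $P_j$ form a geometric series:
\begin{equation*}
\sum_{j=1}^{k} P_j=\sum_{j=0}^{k-1} q^{j}=\frac{q^{k}-1}{q-1}=\frac{(n+1)^{k}-1}{n}.
\end{equation*}
This gives the sharper form of the bound quoted in the statement, namely $\sum_{j=1}^k P_j=\bigl((n+1)^k-1\bigr)/n<(n+1)^k/n$.

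\textbf{Step 2: Inequality \eqref{equation:penalty_sum}.} Multiplying the closed form by $|V|=n$ gives
\begin{equation*}
|V|\sum_{j=1}^{k}P_j=(n+1)^{k}-1<(n+1)^{k}=P_{k+1}
\end{equation*}
for every $k=1,\dots,L-1$. The inequality is strict, with margin exactly $1$.

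\textbf{Step 3: Inequality \eqref{equation:adjacency}.} By definition, $A_\textnormal{adjacency}=|V|\sum_{k=1}^{L}P_k+1$, which strictly exceeds $|V|\sum_{k=1}^{L}P_k$. The bound is therefore immediate. For reference, the closed form gives $A_\textnormal{adjacency}=(n+1)^L$.

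\textbf{Step 4: conclusion.} Both hypotheses of Theorem \ref{theorem:penalties_coloring} now hold, so properties (i)--(ii) follow. There is no real obstacle in this argument. The one point requiring care is the geometric-sum identity: the strictness in Step 2 depends on the exact value $(n+1)^k-1$. The cruder estimate $\sum_{j=1}^k P_j\le (n+1)^k/n$ stated in the corollary only yields a non-strict inequality after multiplying by $n$.
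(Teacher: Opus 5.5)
Your proof is correct and follows the paper's route: bound the geometric partial sum $\sum_{j=1}^{k}P_j$, multiply by $|V|$ to get the first hypothesis of Theorem~\ref{theorem:penalties_coloring}, and note that the $A_\textnormal{adjacency}$ bound holds by construction. Using the exact value $\bigl((|V|+1)^k-1\bigr)/|V|$ is a small real improvement. The paper's stated estimate $\sum_{j=1}^{k}P_j \le (|V|+1)^{k}/n$ only gives the non-strict $|V|\sum_{j=1}^{k}P_j \le P_{k+1}$, while the theorem requires strict inequality.
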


\subsection{Costs of Reducing HUBO Models to QUBO Models}

The HUBO of Equation \eqref{equation:min_graph_coloring_hubo} has degree up to $2L$ where $L=\lceil\log_2 C_\textnormal{num}\rceil$. To execute on hardware supporting only quadratic interactions, we reduce to QUBO form by the Rosenberg method \cite{rosenberg1972breves}, which replaces a product of $d$ binary variables with $d-2$ auxiliary variables enforced by quadratic penalties (Equations~\eqref{equation:rosenberg_1}--\eqref{equation:rosenberg_2}).

\subsubsection{The Rosenberg Reduction Method}

The Rosenberg method reduces higher-order polynomial terms to quadratic form by introducing auxiliary binary variables. For a product of $|V|$ binary variables $\prod_{i=1}^{n} x_i$, the method requires $n-2$ auxiliary variables and proceeds recursively. Specifically, to compute the product $x_1 x_2 \cdots x_n$, we introduce auxiliary binary variables $a_1, a_2, \ldots, a_{n-2}$ and enforce the constraints:

\begin{equation}\label{equation:rosenberg_1}
    a_1 = x_1 x_2,\; a_2= a_1 x_3,\; \cdots,\; a_{n-2} = a_{n-3} x_n
\end{equation}

\noindent where the final auxiliary variable $a_{n-2}$ represents the desired product. Each constraint $a_i = a_{i-1} x_{i+1}$ can be encouraged using a quadratic penalty term of the form:

\begin{align}\label{equation:rosenberg_2}
    M_i\cdot (a_i &- a_{i-1} x_{i+1})^2 = \nonumber \\ & M_i \cdot (a_i - 2a_i a_{i-1} x_{i+1} + a_{i-1} x_{i+1})
\end{align}

\noindent where we have used binary variable idempotency.

\subsubsection{Application to Minimum Graph Coloring}

For each edge $(u,v)\in E$, the adjacency term $\prod_{k=1}^L(x_{u,k}\odot x_{v,k})$ is reduced in two stages:

\smallskip

\noindent\textbf{Stage 1.} For each edge and bit position $k$, introduce $y_{u,v,k}=x_{u,k}\odot x_{v,k}$, enforced by a penalty $M^{(1)}_{u,v,k}\cdot(y_{u,v,k}-1+x_{u,k}+x_{v,k}-2x_{u,k}x_{v,k})^2$. This adds $|E|\cdot L$ auxiliary variables.

\smallskip

\noindent\textbf{Stage 2.} The resulting product $\prod_{k=1}^L y_{u,v,k}$ is reduced by Rosenberg substitution with $L-2$ additional auxiliaries per edge, enforced by penalties $\{M^{(2)}_{u,v,i}\}$.

The total auxiliary variable count is $N_{\text{aux}}=|E|\cdot(2L-2)$, giving a total logical qubit count of $|V|\cdot L + |E|\cdot(2L-2)$. This is less than the one-hot count $(|V|+1)\cdot C_\textnormal{num}$ when:

\begin{equation}\label{equation:quadratization_condition}
    |E| < \frac{1}{2}\cdot\frac{(|V|+1)\cdot C_\textnormal{num} - \lceil\log_2 C_\textnormal{num}\rceil}{\lceil\log_2 C_\textnormal{num}\rceil - 1}
\end{equation}

\subsubsection{Setting the Quadratization Penalty Coefficients}\label{section:quadratization_penalty}

The penalty coefficients $\{M^{(1)}_{u,v,k}\}$ for $k\in[1,2,\dots,L]$ and $\{M^{(2)}_{u,v,i}\}$ for $i\in[1,2,\dots,L-2]$ must be chosen large enough to ensure properties \textnormal{(i)--(ii)} of Theorem~\ref{theorem:penalties_coloring}.

\begin{theorem}[Sufficient Rosenberg Penalties for Logarithmic \textsc{Minimum Graph Coloring} HUBO to QUBO Reduction]\label{theorem:rosenberg_penalties}
    Suppose the penalties  $\{M^{(1)}_{u,v,k}\}$ for $k\in[1,2,\dots,L]$ and $\{M^{(2)}_{u,v,i}\}$ for $i\in[1,2,\dots,L-2]$ satisfy the following two inequalities:

    \begin{align}
        M^{(1)}_{u,v,k} &> A_\textnormal{adjacency} + |V| \cdot\sum_{j=1}^{L} P_j \\
        M^{(2)}_{u,v,i} &> A_\textnormal{adjacency} + |V| \cdot \sum_{j=1}^{L} P_j
    \end{align}

    \noindent Then any lowest-energy solution $\mathbf{x}^*$ of $H^\textnormal{logarithmic}_\textnormal{MGC}(\mathbf{x})$ satisfies properties \textnormal{(i)--(ii)} of Theorem~\ref{theorem:penalties_coloring}.
\end{theorem}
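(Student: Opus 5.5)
The plan is to reduce the statement to Theorem~\ref{theorem:penalties_coloring} through a ``faithful auxiliaries'' argument, in the same spirit as part (i) of Theorem~\ref{theorem:onehot_penalties}.

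Write the quadratized energy as $H_Q(\mathbf{x},\mathbf{y},\mathbf{a}) = H_\text{lexicographic}(\mathbf{x}) + A_\textnormal{adjacency}\sum_{(u,v)\in E} q_{u,v}(\mathbf{y},\mathbf{a}) + H_\text{pen}(\mathbf{x},\mathbf{y},\mathbf{a})$. Here $q_{u,v}$ is the quadratic stand-in for $\prod_k y_{u,v,k}$, namely the product of the last auxiliary with the last $y$. The term $H_\text{pen}\ge 0$ collects the Stage~1 and Stage~2 penalties. Call an assignment \emph{faithful} if every $y_{u,v,k}=x_{u,k}\odot x_{v,k}$ and every Rosenberg auxiliary equals the product it is meant to represent. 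On faithful assignments $H_\text{pen}=0$ and $q_{u,v}$ equals the XNOR product, so $H_Q(\mathbf{x},\cdot,\cdot)=H^\textnormal{logarithmic}_\textnormal{MGC}(\mathbf{x})$. The key claim is therefore that every lowest-energy assignment of $H_Q$ is faithful. Once that holds, minimizing $H_Q$ is the same as minimizing $H^\textnormal{logarithmic}_\textnormal{MGC}$ over $\mathbf{x}$, and properties (i)--(ii) follow directly from Theorem~\ref{theorem:penalties_coloring}.

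To prove faithfulness, take a minimizer $(\mathbf{x}^*,\mathbf{y}^*,\mathbf{a}^*)$. Keep $\mathbf{x}^*$ and replace all auxiliaries by their faithful values, giving $(\mathbf{x}^*,\hat{\mathbf{y}},\hat{\mathbf{a}})$. Two bounds are needed.
\begin{itemize}
\item If the minimizer is unfaithful, at least one penalty term is violated. Each Stage~1 penalty $(y-\text{XNOR})^2$ is then at least $1$, so the energy rises by at least $M^{(1)}$ or $M^{(2)}$.
\item Restoring faithfulness leaves $H_\text{lexicographic}$ unchanged, because $\mathbf{x}^*$ is fixed. The adjacency part changes by at most $A_\textnormal{adjacency}$ per affected edge, since each $q_{u,v}\in\{0,1\}$.
\end{itemize}
The bound $M>A_\textnormal{adjacency}+|V|\sum_j P_j$ is designed to dominate this change.

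The main obstacle is making the per-edge bookkeeping rigorous, and it has two parts.

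First, violations on several edges must each be charged to their own edge's penalty. I would handle this edge by edge. The auxiliaries of distinct edges are disjoint, so the energy decomposes as the lexicographic term plus a sum of independent per-edge blocks $B_{u,v}(\mathbf{x}_u,\mathbf{x}_v,\text{aux}_{u,v})$. For fixed $\mathbf{x}$, each block is minimized separately. It then suffices to show that in each block the faithful setting is strictly optimal.

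Second, the Rosenberg penalty $M(a - 2a a' x + a' x)$ can vanish on an unfaithful pattern. For example, with $a=0$ and $a'x=1$ it equals $a'x\ge0$, but is it strictly positive? The expression evaluates to $1$ when $a'=x=1,a=0$, and to $1$ when $a=1$ with $a'x=0$. It is $0$ exactly when $a=a'x$, so every unfaithful pattern costs at least $M$.

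Within a block, any deviation therefore costs at least $\min M > A_\textnormal{adjacency}$, while it can lower $A_\textnormal{adjacency}q_{u,v}$ by at most $A_\textnormal{adjacency}$. So the faithful setting wins strictly. The extra $|V|\sum_j P_j$ slack is not even needed, only harmless. I expect the real risk is the orientation of the Stage~1 penalty: as written, $(y-1+x_u+x_v-2x_ux_v)^2$ is not quadratic after expansion. I would note this and check that it is still at least $1$ whenever $y\neq$ XNOR, which it is since the inner expression is then $\pm1$.
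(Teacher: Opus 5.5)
Your proposal is correct. It rests on the same domination idea as the paper (a violated auxiliary constraint costs at least $M$, which outweighs any gain), but it organizes that idea differently and more rigorously.

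The paper's proof is a one-line sketch. It bounds the ``maximum benefit of any single constraint violation'' by the full range of the remaining terms: $A_\textnormal{adjacency}$ from the adjacency part plus $|V|\cdot\sum_j P_j$ from the lexicographic part. As written, this handles only one violation at a time, since the adjacency term's full range is $|E|\cdot A_\textnormal{adjacency}$. It also leaves implicit why the $\mathbf{x}$-part of a quadratized minimizer inherits properties (i)--(ii).

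You proceed differently:
\begin{itemize}
\item You hold $\mathbf{x}$ fixed and use the fact that auxiliaries of distinct edges are disjoint, so the energy splits into $H_\textnormal{lexicographic}(\mathbf{x})$ plus independent per-edge blocks.
\item You show the faithful auxiliary setting is strictly optimal in each block: an unfaithful block costs at least $\min M > A_\textnormal{adjacency}$, while the faithful block costs at most $A_\textnormal{adjacency}$.
\item This gives $\min_{\text{aux}} H_Q(\mathbf{x},\cdot)=H^\textnormal{logarithmic}_\textnormal{MGC}(\mathbf{x})$ for every $\mathbf{x}$, so Theorem~\ref{theorem:penalties_coloring} applies directly.
\end{itemize}

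Your route buys three things: a correct treatment of simultaneous violations on many edges, an explicit reduction to the earlier theorem, and the sharper observation that $M > A_\textnormal{adjacency}$ already suffices. The last point holds because $H_\textnormal{lexicographic}$ depends only on $\mathbf{x}$. The paper's extra $|V|\cdot\sum_j P_j$ slack is therefore harmless but unnecessary; its version gains only brevity.

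Your remark that the Stage~1 penalty is cubic is right. The same is true of the paper's Stage~2 expression $a-2aa'x+a'x$, which contains the cubic monomial $aa'x$. Your argument survives either way. It uses only the dichotomy ``$0$ if satisfied, at least $M$ otherwise,'' and the genuinely quadratic Rosenberg penalty $M(a'x-2aa'-2ax+3a)$ satisfies that dichotomy too.
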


\begin{proof}
    Each auxiliary variable constraint contributes a penalty that is zero when satisfied and at least $M$ when violated. The maximum benefit of any single constraint violation is bounded by the full range of the remaining Hamiltonian terms: at most $A_\textnormal{adjacency}$ from the adjacency term and $|V|\cdot\sum_{j=1}^L P_j$ from the lexicographic term. Since each $M$ exceeds this bound, no violation can reduce total energy.
\end{proof}

\begin{corollary}[Practical Rosenberg Penalties]\label{corollary:rosenberg-practical}
    A convenient uniform choice using the values of Corollary~\ref{corollary:logarithmic-practical} is:
    \begin{equation}
        M^{(1)}_{u,v,k} = M^{(2)}_{u,v,i} = 2\left((|V|+1)^L - 1\right) + 1 + \epsilon
    \end{equation}
    for any $\epsilon > 0$.
\end{corollary}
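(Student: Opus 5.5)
The corollary follows by substituting the explicit values of Corollary~\ref{corollary:logarithmic-practical} into the two (identical) right-hand sides of Theorem~\ref{theorem:rosenberg_penalties}, evaluating the resulting bound in closed form, and checking that the proposed uniform value exceeds it. The plan has three steps: compute the geometric sum $\sum_{j=1}^L P_j$, substitute it into $A_\textnormal{adjacency}$, and then add the two terms of the bound.

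First, with $P_j=(|V|+1)^{j-1}$, the finite geometric series gives
\begin{equation*}
\sum_{j=1}^{L}P_j=\frac{(|V|+1)^L-1}{|V|}, \qquad\text{so}\qquad |V|\cdot\sum_{j=1}^{L}P_j=(|V|+1)^L-1 .
\end{equation*}
This telescoping identity is the crux, since it makes the factor $|V|$ cancel exactly.

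Second, Corollary~\ref{corollary:logarithmic-practical} sets $A_\textnormal{adjacency}=|V|\sum_{k}P_k+1=(|V|+1)^L$. The common threshold in Theorem~\ref{theorem:rosenberg_penalties} is therefore
\begin{equation*}
A_\textnormal{adjacency}+|V|\sum_{j=1}^L P_j=(|V|+1)^L+(|V|+1)^L-1=2\left((|V|+1)^L-1\right)+1 .
\end{equation*}
The proposed choice $2\left((|V|+1)^L-1\right)+1+\epsilon$ with $\epsilon>0$ is strictly larger than this threshold, so both strict inequalities of Theorem~\ref{theorem:rosenberg_penalties} hold for every $(u,v,k)$ and every $(u,v,i)$. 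Invoking that theorem then gives properties (i)--(ii).

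There is no real obstacle here; the only care needed is bookkeeping. The inequalities are strict, which is why $\epsilon>0$ is required. The corollary also inherits whatever validity Theorem~\ref{theorem:rosenberg_penalties} has, since we use it as a black box. As a sanity check, I would also note that the choice is uniform across edges and bit indices because the threshold does not depend on $(u,v,k)$ or $(u,v,i)$.
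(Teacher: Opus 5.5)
Your proposal is correct and is exactly the substitution the paper relies on; the paper states the corollary without a written proof. With $P_j=(|V|+1)^{j-1}$, the geometric sum gives $|V|\sum_{j=1}^{L}P_j=(|V|+1)^L-1$ and $A_\textnormal{adjacency}=(|V|+1)^L$, so the common threshold in Theorem~\ref{theorem:rosenberg_penalties} is $2\left((|V|+1)^L-1\right)+1$, which the stated uniform value strictly exceeds for any $\epsilon>0$.
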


\noindent The bounds presented here are sufficient but not necessary. Alternative reduction techniques such as compressed quadratization \cite{Mandal2020} may further reduce auxiliary variable overhead.

\subsection{$2$-Qubit Gate Requirements}\label{section:gate_counts}

We derive the CNOT gate count for one QAOA phase-separation layer $e^{-i\gamma H}$ under each encoding. A $k$-local Pauli-$Z$ phase gadget requires $2(k-1)$ CNOTs \cite{cowtan2019phase}. We set $L=\lceil\log_2 C_\textnormal{num}\rceil$ and use the Ising substitution $x_j=(I-Z_j)/2$.

\begin{theorem}[CNOT Gate Count for One-Hot QUBO Encoding]\label{theorem:cnot_onehot}
    The one-hot QUBO formulation requires:
    \begin{equation}
        N_{\mathrm{CNOT}}^{\mathrm{(one-hot)}} = C_\textnormal{num}\cdot\left(|V|(C_\textnormal{num}+1)+2|E|\right)
    \end{equation}
\noindent CNOT gates.
\end{theorem}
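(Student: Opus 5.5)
The plan is to write $H^\textnormal{one-hot}_\text{MGC}$ from Equation \eqref{eqn:one-hot-MGC} in Ising form, sort its Pauli-$Z$ terms by locality, and charge each $k$-local term $2(k-1)$ CNOTs, following \cite{cowtan2019phase}.

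The Hamiltonian is a QUBO, so under $x_j=(I-Z_j)/2$ it becomes a sum of three kinds of terms:
\begin{itemize}
\item a constant, which is a global phase and costs nothing;
\item single-$Z$ terms, which are $R_Z$ rotations and cost $0$ CNOTs;
\item $Z_iZ_j$ terms, each a $2$-local gadget costing $2$ CNOTs.
\end{itemize}
Hence $N_{\mathrm{CNOT}}^{\mathrm{(one-hot)}}=2N_{ZZ}$, where $N_{ZZ}$ is the number of distinct unordered variable pairs $\{i,j\}$ with a nonzero quadratic coefficient. Each product $x_ix_j$ with $i\neq j$ contributes $\tfrac14 Z_iZ_j$ plus lower-order terms. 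So it suffices to count the distinct bilinear monomials of the QUBO and check that none cancels.

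I would count these monomials term by term.
\begin{itemize}
\item \textbf{One-hot term.} Expanding $A_\textnormal{one-hot}\bigl(1-\sum_c x_{v,c}\bigr)^2$ and using idempotency gives linear terms plus $2A_\textnormal{one-hot}\sum_{c<c'}x_{v,c}x_{v,c'}$. That is $\binom{C_\textnormal{num}}{2}$ pairs per vertex, so $|V|C_\textnormal{num}(C_\textnormal{num}-1)/2$ pairs in total.
\item \textbf{Adjacency term.} This contributes the pairs $x_{u,c}x_{v,c}$, giving $|E|\cdot C_\textnormal{num}$ pairs.
\item \textbf{Count term.} $H_\text{count}$ is linear and contributes no pairs.
\item \textbf{Link term.} $A_\textnormal{link}\,x_{v,c}(1-y_c)$ contributes the pairs $x_{v,c}y_c$, giving $|V|\cdot C_\textnormal{num}$ pairs.
\end{itemize}

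The step needing care is confirming that these pair sets are disjoint and that no coefficient vanishes. The three families are disjoint by type:
\begin{itemize}
\item one-hot pairs join one vertex with two different colors;
\item adjacency pairs join two adjacent vertices (with $u\neq v$, since $G$ is simple) with the same color;
\item link pairs join an $x$ variable with a $y$ variable.
\end{itemize}
Within each family, the pairs are distinct. Their coefficients are $2A_\textnormal{one-hot}$, $A_\textnormal{adjacency}$ and $-A_\textnormal{link}$, all nonzero because the penalties are positive. Passing to Ising form maps distinct pairs to distinct $Z_iZ_j$ operators with coefficient equal to $\tfrac14$ times the QUBO coefficient, so no cancellation can occur.

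Summing the three counts and doubling gives
\[
2\Bigl(\tfrac{|V|C_\textnormal{num}(C_\textnormal{num}-1)}{2}+|E|C_\textnormal{num}+|V|C_\textnormal{num}\Bigr)=C_\textnormal{num}\bigl(|V|(C_\textnormal{num}+1)+2|E|\bigr),
\]
which is the claimed value.
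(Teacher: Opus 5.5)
Your proposal is correct and follows the paper's proof: you count the distinct quadratic monomials from the one-hot term ($|V|\binom{C_\textnormal{num}}{2}$), the adjacency term ($|E|\,C_\textnormal{num}$) and the link term ($|V|\,C_\textnormal{num}$), note that the count term is linear, and charge $2$ CNOTs per $Z_iZ_j$ gadget. Your additional check that the three monomial families are disjoint and have nonzero coefficients, so that no $Z_iZ_j$ term cancels in the Ising form, is a detail the paper leaves implicit.
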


\begin{proof}
    Since the formulation is quadratic, each monomial $x_ix_j$ yields one $2$-local Ising interaction requiring $2$ CNOTs. Counting distinct quadratic monomials: the one-hot constraint contributes $|V|\cdot\binom{C_\textnormal{num}}{2}$, the adjacency constraint contributes $|E|\cdot C_\textnormal{num}$, and the link constraint contributes $|V|\cdot C_\textnormal{num}$ (the count term is purely linear). The total is $\frac{1}{2}|V|C_\textnormal{num}(C_\textnormal{num}-1)+C_\textnormal{num}(|E|+|V|)$; multiplying by $2$ gives the result.
\end{proof}

\begin{theorem}[CNOT Gate Count for Logarithmic HUBO Encoding]\label{theorem:cnot_hubo}
    The logarithmic HUBO formulation requires:
    \begin{equation}
        N_{\mathrm{CNOT}}^{\mathrm{(log)}} = |E|\cdot\left(2(L-1)\cdot 2^L + 2\right)
    \end{equation}
\noindent CNOT gates.
\end{theorem}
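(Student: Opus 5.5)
The plan is to pass to the Ising picture and count Pauli-$Z$ strings edge by edge. Under the substitution $x_j=(I-Z_j)/2$, I first rewrite each XNOR factor. From $x_{u,k}\odot x_{v,k}=2x_{u,k}x_{v,k}-x_{u,k}-x_{v,k}+1$ one gets the identity
\begin{equation*}
x_{u,k}\odot x_{v,k}=\tfrac{1}{2}\left(I+Z_{u,k}Z_{v,k}\right).
\end{equation*}
This is checked directly: the linear $Z$ terms cancel and the constant becomes $\tfrac12$. Consequently, for a fixed edge $(u,v)$,
\begin{equation*}
\prod_{k=1}^{L}\left(x_{u,k}\odot x_{v,k}\right)=2^{-L}\sum_{S\subseteq\{1,\dots,L\}}\ \prod_{k\in S}Z_{u,k}Z_{v,k}.
\end{equation*}
The $S=\emptyset$ term is a global phase and needs no gates. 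Each nonempty $S$ with $|S|=s$ gives a $2s$-local $Z$-string.

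Next I dispose of the remaining terms. The lexicographic part $H_\text{lexicographic}$ is linear in the $x_{v,k}$, so it yields only single-qubit $R_Z$ rotations and contributes no CNOTs. I also check that no merging or cancellation occurs between edges. Every nonempty string for edge $(u,v)$ contains $Z_{u,k}Z_{v,k}$ for some $k$, which identifies the unordered pair $\{u,v\}$. Hence strings from distinct edges are distinct, and strings from the same edge are distinct for distinct $S$. The total count is therefore $|E|$ times the per-edge count.

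Using the phase-gadget cost $2(d-1)$ CNOTs for a $d$-local term with $d=2s$, the per-edge count is
\begin{equation*}
\sum_{s=1}^{L}\binom{L}{s}\,2(2s-1)=4\sum_{s}s\binom{L}{s}-2\sum_{s\ge1}\binom{L}{s}=4L2^{L-1}-2(2^L-1).
\end{equation*}
This simplifies to $2(L-1)2^L+2$. Multiplying by $|E|$ gives the claimed $N^{(\log)}_{\mathrm{CNOT}}$.

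The main point requiring care is the no-cancellation and no-merging argument, together with the convention that identical Pauli strings are implemented by a single gadget. The binomial sum itself is routine.
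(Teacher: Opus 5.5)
Your proposal is correct and follows the paper's proof step for step. You use the Ising substitution giving $x_{u,k}\odot x_{v,k}=(I+Z_{u,k}Z_{v,k})/2$, expand the product over subsets $S\subseteq\{1,\dots,L\}$, charge $2(2s-1)$ CNOTs per $2s$-local string, evaluate the same binomial sum, and multiply by $|E|$. Your explicit check that strings from distinct edges or distinct $S$ never coincide is a small piece of rigor the paper leaves implicit, and it is valid because the vertex support of every nonempty string for edge $(u,v)$ is exactly $\{u,v\}$.
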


\begin{proof}
    The lexicographic term is $1$-local and contributes zero CNOTs. For each edge, the Ising substitution yields $x_{u,k}\odot x_{v,k}=(1+Z_{u,k}Z_{v,k})/2$, so the adjacency product becomes $2^{-L}\cdot \prod_{k=1}^L(1+Z_{u,k}Z_{v,k})$. Expanding gives $\sum_{S\subseteq[L]}\prod_{k\in S}Z_{u,k}Z_{v,k}$, where a subset of size $s\geq 1$ produces a $2s$-local interaction requiring $2(2s-1)$ CNOTs. Summing over all nonempty subsets per edge:
    \begin{align}
        \sum_{s=1}^{L}\binom{L}{s}\cdot 2(2s-1) &= 4\cdot L\cdot 2^{L-1} - 2(2^L-1) \nonumber\\
        &= 2(L-1)\cdot 2^L + 2
    \end{align}
    using $\sum_{s=0}^L\binom{L}{s}s=L\cdot 2^{L-1}$ and $\sum_{s=0}^L\binom{L}{s}=2^L$. Multiplying by $|E|$ gives the result.
\end{proof}

\noindent\textbf{Comparison.} The one-hot encoding requires $\Theta(C_\textnormal{num}^2\cdot|V|+C_\textnormal{num}\cdot|E|)$ CNOTs versus $\Theta( C_\textnormal{num}\cdot\log_2 C_\textnormal{num}\cdot|E|)$ for the logarithmic encoding. For sparse graphs ($|E|=\Theta(|V|)$), the ratio scales as $\Theta(C_\textnormal{num}/\log_2 C_\textnormal{num})$, reflecting a near-exponential gate reduction. Even for dense graphs, the one-hot encoding retains an additional $C_\textnormal{num}^2\cdot|V|$ contribution from the one-hot constraint, ensuring asymptotically fewer $2$-qubit gates for the logarithmic encoding whenever $|V|\geq\Omega(\lceil\log_2 C_\textnormal{num}\rceil)$.

\section{Extension of Logarithmic HUBO Encoding to General Graph Partitioning Problems}\label{section:treating_the_general_case}

We now cast the general graph partitioning (GP) objective function of Equation \eqref{equation:general_graph_partitioning_problem} in terms of binary variables by assigning $L=\lceil\log_2|V|\rceil$ bits $\{x_{v,1},x_{v,2},\dots,x_{v,L}\}=\{x_{v,k}\}_{k\in[1,...,L]}$ to each vertex $v\in V$. Our first step is to recognize that:

\begin{align}
    \mathbf{1}[\ell_u= \ell_v]&=\prod_{k=1}^Lx_{u,k}\odot x_{u,k} \\
     \mathbf{1}[\ell_u\neq \ell_v]&=1-\prod_{k=1}^Lx_{u,k}\odot x_{u,k}
\end{align}

\noindent Thus we represent $\phi_{u,v}(\mathbf{1}[\ell_u= \ell_v])$ in terms of binary variables. Next, instead of directly \textit{counting} the number of distinct labels assigned to the vertices of $G$, as in $|\{\ell_v:v\in V\}|$, we regularize by way of a lexicographic penalty:

\begin{equation}
    H_\textnormal{lexicographic}(\mathbf{x})=\sum_{k=1}^{L}P_k\sum_{v\in V}x_{v,k}
\end{equation}

\noindent where $P_1<P_2<\cdots<P_L$ (see Section \ref{section:sufficient_conditions_lex_and_feas} for strict bounds on $\{P_k\}_{k\in[1,2,\dots,L]}$). The full HUBO representation of Equation \eqref{equation:general_graph_partitioning_problem} is then given by:

\begin{equation}\label{equation:general_hubo}
    H^\textnormal{logarithmic}_\textnormal{GP}(\mathbf{x})=H_\textnormal{partition}(\mathbf{x})+H_\textnormal{lexicographic}(\mathbf{x})
\end{equation}

\noindent where:

\begin{align}
    H_\textnormal{partition}(\mathbf{x})=&A_\textnormal{partition}\cdot\sum_{(u,v)\in E}\Bigg[\alpha_{u,v}\cdot\prod_{k=1}^L(x_{u,k}\odot x_{v,k})\nonumber\\& +\beta_{u,v}\cdot\left(1-\prod_{k=1}^L(x_{u,k}\oplus x_{v,k})\right)\Bigg]
\end{align}

\noindent $A_\textnormal{partition}$ is a penalty associated with the partition term, and $P_k$ are the lexicographic penalties.

\subsection{Sufficient Conditions for Ensuring Lexicographic Minimization and Feasibility of the Lowest-Energy Solution}\label{section:sufficient_conditions_on_feasiblity}

We have already established in Section \ref{section:sufficient_conditions_lex_and_feas} sufficient conditions for guaranteeing that lexicographic minimization is preserved (part (i) of Theorem \ref{theorem:penalties_coloring}); the proof applies to all graph partitioning problems of the form presented in Equation \eqref{equation:general_hubo}. Regarding $A_\textnormal{partition}$ for such problems, feasibility depends on the structure of the particular problem. For example, in the case of \textsc{Minimum Graph Coloring} and \textsc{Minimum} $k$-\textsc{Cut}, feasible solutions correspond to valid colorings and valid $k$-cuts, respectively. In the case of \textsc{Community Detection}, all solutions are feasible given the absence of hard constraints.

In what follows, we therefore approach determining sufficient conditions on $A_\textnormal{partition}$ for graph partitioning problems with hard constraints, assuming only that we can easily separate the solution space into feasible and infeasible solutions. For graph partitioning problems lacking hard constraints, it is sufficient to set $A_\textnormal{partition}>0$.

\begin{mydef}[]{Feasibility Gap for Graph Partitioning Problems with Hard Constraints}{feasibility_gap}
    Let $\mathcal{F} \subseteq \{0,1\}^{n \times L}$ denote the set of feasible binary solutions (those satisfying all hard constraints of the problem). Define:
    
    \begin{align}
        H_{\min}^{\text{feasible}} &= \min_{\mathbf{x} \in \mathcal{F}} H_{\text{partition}}(\mathbf{x}) \\
        H_{\min}^{\text{infeasible}} &= \min_{\mathbf{x} \notin \mathcal{F}} H_{\text{partition}}(\mathbf{x})
    \end{align}
    
    $\Delta_{\min} = H_{\min}^{\text{infeasible}} - H_{\min}^{\text{feasible}}$ is the \textit{feasibility gap}.
\end{mydef}

\begin{theorem}[Sufficient Partition Penalty for Logarithmic Graph Partitioning HUBOs with Hard Constraints]
    Suppose the penalties $P_1, \ldots, P_L$ and $A$ satisfy:
    
    \begin{align}
        P_{k+1} &> |V| \cdot \sum_{j=1}^{k} P_j \quad \text{for every } k = 1, \ldots, L-1 \\
        A_\textnormal{partition} &> \frac{|V| \cdot \sum_{k=1}^{L} P_k}{\Delta_{\min}}\label{equation:general_penalty}
    \end{align}
    
    \noindent Then any lowest-energy solution $\mathbf{x}^*$ of $H^\textnormal{logarithmic}_\textnormal{GP}(\mathbf{x})$ (Equation \eqref{equation:general_hubo}) is feasible; \textit{i.e.}, $\mathbf{x}^* \in \mathcal{F}$ (Definition \ref{def:feasibility_gap}).
\end{theorem}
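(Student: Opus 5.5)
The argument is a direct comparison of energies: take any infeasible $\mathbf{x}$ and exhibit a feasible $\tilde{\mathbf{x}}$ with strictly smaller total energy. That rules out infeasible lowest-energy solutions. The natural choice of $\tilde{\mathbf{x}}$ is a minimizer of $H_\textnormal{partition}$ over $\mathcal{F}$, which exists because $\mathcal{F}$ is finite and assumed nonempty. We also need $\Delta_{\min}>0$ for the bound \eqref{equation:general_penalty} to make sense, and I will state this explicitly as part of the hypothesis. The lexicographic condition on $P_k$ plays no role in the feasibility argument; it is carried along only so that part (i) of Theorem~\ref{theorem:penalties_coloring} still applies to the feasible optimum.

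\textbf{Step 1 (lexicographic range).} Since every $P_k>0$ and $0\le s_k=\sum_v x_{v,k}\le |V|$, we have $0\le H_\textnormal{lexicographic}(\mathbf{x})\le |V|\sum_{k=1}^L P_k$ for every $\mathbf{x}$. Hence for any two bitstrings,
\[
H_\textnormal{lexicographic}(\tilde{\mathbf{x}})-H_\textnormal{lexicographic}(\mathbf{x})\le |V|\sum_{k=1}^L P_k .
\]
This is the same range bound used in part (ii) of Theorem~\ref{theorem:penalties_coloring}.

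\textbf{Step 2 (partition gap).} Here a normalization issue appears. Definition~\ref{def:feasibility_gap} writes $H_\textnormal{partition}$ with the factor $A_\textnormal{partition}$ already included, yet \eqref{equation:general_penalty} divides by $\Delta_{\min}$, which only makes sense if $\Delta_{\min}$ is measured on the \emph{unscaled} bracketed sum. I will therefore read $\Delta_{\min}$ as the gap of $H_\textnormal{partition}/A_\textnormal{partition}$. With that reading, for infeasible $\mathbf{x}$ and $\tilde{\mathbf{x}}$ the feasible minimizer,
\[
H_\textnormal{partition}(\mathbf{x})-H_\textnormal{partition}(\tilde{\mathbf{x}})\ge A_\textnormal{partition}\bigl(H^{\text{infeasible}}_{\min}-H^{\text{feasible}}_{\min}\bigr)=A_\textnormal{partition}\Delta_{\min}.
\]

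\textbf{Step 3 (combine).} Adding the two steps gives
\[
H^\textnormal{logarithmic}_\textnormal{GP}(\tilde{\mathbf{x}})-H^\textnormal{logarithmic}_\textnormal{GP}(\mathbf{x})\le -A_\textnormal{partition}\Delta_{\min}+|V|\sum_k P_k .
\]
The right-hand side is strictly negative by \eqref{equation:general_penalty}, so $\mathbf{x}$ is not a lowest-energy solution. It follows that every minimizer of $H^\textnormal{logarithmic}_\textnormal{GP}$ lies in $\mathcal{F}$.

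The main obstacle is not the inequality chain but getting the hypotheses stated correctly. We need $\mathcal{F}\neq\emptyset$ and $\Delta_{\min}>0$. The latter can fail: in graph coloring with $\alpha_{u,v}=1$ on edges, $\Delta_{\min}=1$, but a general choice of $\alpha,\beta$ can give a nonpositive gap. We also need the scaling convention of Step 2 to be consistent. I will add a remark fixing both points and noting that they hold in the \textsc{Minimum Graph Coloring} case, where the result reduces to Inequality~\eqref{equation:adjacency}.
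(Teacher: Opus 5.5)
Your proof is correct and follows the same route as the paper: compare an infeasible candidate with a feasible point, bound the lexicographic change by $|V|\sum_{k=1}^{L}P_k$, and outweigh it with $A_\textnormal{partition}\Delta_{\min}$. Your choice of $\tilde{\mathbf{x}}$ as a minimizer of the partition term over $\mathcal{F}$ is actually needed, because the paper takes an arbitrary feasible $\tilde{\mathbf{x}}$ and cites $H_\textnormal{partition}(\tilde{\mathbf{x}})\ge H_{\min}^{\text{feasible}}$, which points the wrong way for an upper bound; likewise your unscaled reading of $\Delta_{\min}$ and the hypotheses $\mathcal{F}\neq\emptyset$, $\Delta_{\min}>0$ are exactly what the paper's proof implicitly relies on.
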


\begin{proof}

Suppose for contradiction that $\mathbf{x}^*$ is a lowest-energy solution with $\mathbf{x}^* \notin \mathcal{F}$. Since $\mathbf{x}^*$ is infeasible, we have $H_{\text{partition}}(\mathbf{x}^*) \geq H_{\min}^{\text{infeasible}}$. Consider any feasible solution $\tilde{\mathbf{x}} \in \mathcal{F}$. By definition, $H_{\text{partition}}(\tilde{\mathbf{x}}) \geq H_{\min}^{\text{feasible}}$. 

The change in total energy from $\mathbf{x}^*$ to $\tilde{\mathbf{x}}$ is:
\begin{align}
    H^\textnormal{logarithmic}_\textnormal{GP}(\tilde{\mathbf{x}}) &- H^\textnormal{logarithmic}_\textnormal{GP}(\mathbf{x}^*) \nonumber \\ &= A_\textnormal{partition} \cdot [H_{\text{partition}}(\tilde{\mathbf{x}}) - H_{\text{partition}}(\mathbf{x}^*)] 
    \nonumber \\&\quad\,+ [H_{\text{lex}}(\tilde{\mathbf{x}}) - H_{\text{lex}}(\mathbf{x}^*)] \notag \\
    &\leq A_\textnormal{partition} \cdot [H_{\min}^{\text{feasible}} - H_{\min}^{\text{infeasible}}] 
    + |V| \cdot \sum_{k=1}^{L} P_k \notag \\
    &= -A_\textnormal{partition} \cdot \Delta_{\min} + |V| \cdot \sum_{k=1}^{L} P_k
\end{align}

\noindent where the inequality uses $H_{\text{partition}}(\tilde{\mathbf{x}}) \geq H_{\min}^{\text{feasible}}$ and $H_{\text{partition}}(\mathbf{x}^*) \geq H_{\min}^{\text{infeasible}}$. Recall from earlier that the lexicographic term can change by at most $|V|n \cdot \sum_{k=1}^{L} P_k$ (as each $s_k$ can change by at most $|V|$).

By Inequality \eqref{equation:general_penalty}, we have:
\begin{equation}
A_\textnormal{partition} \cdot \Delta_{\min} > |V| \cdot \sum_{k=1}^{L} P_k
\end{equation}

\noindent Therefore:
\begin{equation}
H^\textnormal{logarithmic}_\textnormal{GP}(\tilde{\mathbf{x}}) - H^\textnormal{logarithmic}_\textnormal{GP}(\mathbf{x}^*) < 0
\end{equation}

\noindent This implies $H^\textnormal{logarithmic}_\textnormal{GP}(\tilde{\mathbf{x}}) < H^\textnormal{logarithmic}_\textnormal{GP}(\mathbf{x}^*)$, contradicting the assumption that $\mathbf{x}^*$ is a lowest-energy solution. We conclude that every lowest-energy solution must be feasible.
\end{proof}

This requires us to use $\Delta_{\min}$ to calculate an appropriate value for $A_\textnormal{partition}$, in the general case. Fortunately, as we have seen, in at least one case (\textsc{Minimum Graph Coloring}) we can quickly compute this value.

\section{Benchmarking on Quantum Hardware}\label{section:benchmarking}

We generated 350 randomly-connected graphs with edge densities in $[0.2, 0.8]$ and $|V|\in[4,100]$ (available at \url{https://github.com/pangara/random-connected-graphs/}). Each graph was treated as a \textsc{Minimum Graph Coloring} instance and solved on a quantum annealer using both the one-hot formulation (Section \ref{section:one_hot_formulation}, penalties per Section \ref{section:onehot_penalties}) and the quadratized HUBO formulation (Section \ref{section:hubo_formulation}, penalties per Sections~\ref{section:sufficient_conditions_lex_and_feas} and~\ref{section:quadratization_penalty}). Brooks' bound on chromatic number was pre-computed in linear time to upper-bound qubit requirements per encoding. The code used to conduct benchmarking, analysis, and plotting is available at \url{https://github.com/nooloop/QuantumGraphColoring}.

\subsection{Quantum Annealing}

The \textsc{D-Wave} \texttt{Advantage2\_system1.11} quantum processing unit (QPU) contains \textasciitilde{}4,400 superconducting flux qubits on a Zephyr graph with $20$-way connectivity \cite{boothby2021zephyr}. Problems whose interaction graphs are not subgraphs of this topology required \emph{minor embedding}, where each logical qubit is represented as a chain of ferromagnetically-coupled physical qubits \cite{pinilla2024context, cai}. We used \textsc{D-Wave}'s default \texttt{minorminer} algorithm for embedding; embedding time is excluded from Time-to-Solution measurements but the embeddings were recorded for qubit count analysis. The QPU implements the transverse-field Ising Hamiltonian $H(t)=A(t)H_0+B(t)H_1$, where $A(t)$ decreases from $1$ to $0$, $B(t)$ increases from $0$ to $1$, $H_0=-\sum_i\sigma_x^{(i)}$ is the initial driver Hamiltonian, and $H_1=-\sum_i h_i\sigma_z^{(i)}-\sum_{i<j}J_{ij}\sigma_z^{(i)}\sigma_z^{(j)}$ is the problem Hamiltonian.

\subsubsection{QPU Settings}\label{section:settings_of_the_dwave_quantum_processing_unit_used_for_benchmarking}

All problem instances were solved entirely on-chip without hybrid decomposition. The annealing time was $T_S=20\;\mu$s (comparable to the quoted $5$--$20\;\mu$s coherence time), using the default schedule with chain strength set by uniform torque compensation \cite{Gilbert2024}. For each instance and encoding, $1000$ runs were executed with thermalization time ${\approx}\,1000\;\mu$s/run and readout time ${\approx}\,(40+N_q)\;\mu$s/run, where $N_q$ is the physical qubit count. HUBO-to-QUBO reduction was performed by \texttt{dimod.make\_quadratic()} with penalties set per Section \ref{section:quadratization_penalty}.

\subsubsection{Timing}\label{section:timing_the_execution_of_the_dwave_quantum_processing_unit}

Per-problem execution time is the sum of programming, annealing, readout, and thermalization times across all runs. Embedding time is treated as negligible (given that embeddings are cacheable \textit{via} hash-table lookup). Exact per-problem timing was retrieved from the \texttt{SampleSet.info['timing']} attribute.

\begin{figure}
\begin{center}
    \includegraphics[width=8.5cm]{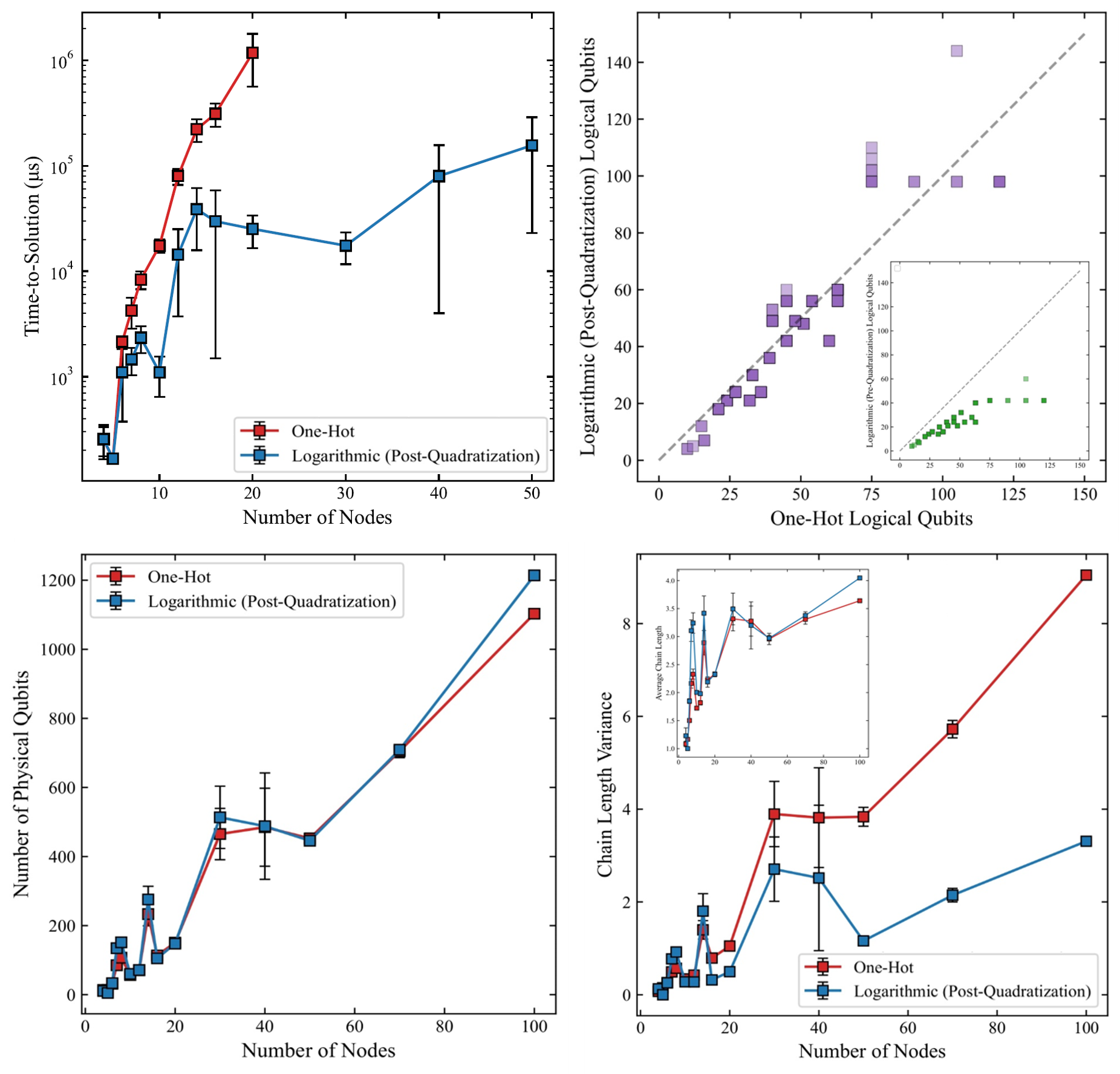}
\caption{\textit{Upper left}: TTS as a function of $|V|$. \textit{Upper right}: Post-quadratization logical qubits per encoding; inset: pre-quadratization. Intensity corresponds to the number of problems occupying the same point. \textit{Lower left}: Physical qubits after minor embedding. \textit{Lower right}: Chain length variance; inset: average chain length.}
\label{fig:dwave}
\end{center}
\end{figure}

\subsubsection{Performance Measures}\label{section:performance_measures}

We evaluate the encodings using three complementary metrics: Time-to-Solution, which captures end-to-end solver performance; survival-time analysis, which enables unbiased aggregation across instances where one or both encodings fail to find the optimum; and qubit counts at each stage of the compilation pipeline.

\paragraph{Time-to-Solution (TTS)} TTS is the expected wall-clock time to find the lowest-energy feasible solution (identified across both encodings) with probability $\geq 50\%$:

\begin{equation}\label{equation:tts}
    \mathrm{TTS}
    = T_\textnormal{programming}+T_\textnormal{run}\cdot \frac{\ln(1 - 0.5)}{\ln(1 - p_S)}
\end{equation}

\noindent where $p_S$ is the fraction of $R=1000$ runs returning the target solution, and $T_\textnormal{run}=T_{\mathrm{anneal}} + T_{\mathrm{readout}}+T_\text{thermalize}$. Instances with $p_S=0$ are right-censored and treated using survival-time analysis when aggregating TTS results over problems.

\paragraph{Survival-Time Analysis for Aggregated TTS} When aggregating TTS across instances grouped by a shared characteristic (\textit{e.g.}, same $|V|$ or edge density), undefined TTS values ($p_S=0$) cannot be discarded without downward bias or assigned arbitrary large values \cite{Gagliolo2010}. We apply Kaplan--Meier survival analysis \cite{kaplan1958nonparametric}: computed TTS values enter as uncensored observations and unsolved instances enter as right-censored at $T_{\mathrm{censor}}=R\cdot(T_{\mathrm{anneal}}+T_{\mathrm{readout}}+T_\text{thermalize})$. The median TTS is extracted at $\hat{S}(t)=0.5$ with $95\%$ confidence intervals \textit{via} Greenwood's formula \cite{Miettinen2008}; when the curve does not cross $0.5$, the median is reported as a lower bound.

\paragraph{Qubit Counts} For each encoding and instance we record: (i)~pre-quadratization logical qubits, (ii)~post-quadratization logical qubits (HUBO only), and (iii)~physical qubits after minor embedding.

\section{Benchmarking Results}\label{section:results}

Figure~\ref{fig:dwave} compares the one-hot and logarithmic encodings on the \textsc{D-Wave} \texttt{Advantage2\_system1.11} QPU across TTS, qubit counts, and chain length statistics.

\paragraph{Time-to-Solution} Both encodings exhibit increasing TTS with $|V|$ (upper-left panel). The logarithmic encoding achieves substantially lower TTS throughout. While comparable at $|V|=4$, TTS is approximately one to two orders of magnitude lower by $|V|=20$. This widening gap suggests that the logarithmic encoding grants progressively greater advantages as problem size grows, likely due to more uniform minor-embedding chains (see below).

\paragraph{Logical and Physical Qubit Counts} The upper-right inset confirms that pre-quadratization logical qubits lie well below the $y{=}x$ diagonal for every instance, verifying the expected exponential compression. After quadratization (main scatter), many instances remain below the diagonal, but dense graphs can exceed the one-hot count; this is consistent with Equation \eqref{equation:quadratization_condition}. The lower-left panel shows that \textit{physical} qubit counts \textit{after} minor embedding are comparable between encodings; the denser logarithmic encoding interaction graph absorbs much of its pre-quadratization advantage after quadratization and embedding.

\paragraph{Chain Length Statistics}\label{section:chain_lengths} While average chain lengths are similar between encodings (lower-right inset), chain length \textit{variance} is markedly lower for the logarithmic encoding, with the gap widening with $|V|$. We posit this is a key driver of the TTS advantage: non-uniform chains distort the problem Hamiltonian by imposing different effective energy scales on different logical qubits \cite{pelofske2025comparing}; a single global chain strength cannot simultaneously serve both long and short chains \cite{Gilbert2024}; and longer outlier chains are disproportionately susceptible to thermal noise, seeding chain breaks \cite{jeong2025embedding, pelofske2025comparing}. The logarithmic encoding's more uniform chains allow the chain strength to be closer to optimal for all chains simultaneously, preserving problem fidelity throughout the anneal. These results suggest that chain length \textit{uniformity} can be decisive even when physical qubit counts and average chain lengths are comparable.

\section{Discussion and Conclusions}\label{section:discussion_and_conclusions}

We have presented a one-hot QUBO encoding, and a qubit- and gate-efficient HUBO formulation for label-symmetric, pairwise-decomposable graph partitioning problems with minimization of partition count. To the best of our knowledge, this work is the first to approach the optimization of these problems using quantum computing. The HUBO construction encodes each $L$-valued variable using $\lceil\log_2 L\rceil$ bits and employs a novel lexicographic penalty system that implicitly minimizes the number of distinct labels without dedicated indicator variables. We derived provably sufficient conditions on all penalty coefficients---including quadratization penalties---guaranteeing feasibility and optimality of the lowest-energy solution. The formulation applies broadly, encompassing problems satisfying the properties listed in Section \ref{sec:graph_paritioning_problems}.

\paragraph{Theoretical Contributions} The lexicographic penalty system we introduce is novel. It allows our logarithmic encoding of graph partitioning problems to reduce per-vertex bits from $L$ (one-hot) to $\lceil\log_2 L\rceil$, yielding a near-exponential CNOT gate reduction for sparse graphs and asymptotically fewer gates whenever $|V|\geq\Omega(\lceil\log_2 L\rceil)$. To the best of our knowledge, we provide the first sufficient penalty conditions for a logarithmically encoded graph partitioning HUBO, including quadratization penalties. The Rosenberg reduction introduces $|E|\cdot(2L{-}2)$ auxiliaries; quadratization requires fewer logical qubits than one-hot when the graph is sufficiently sparse (Equation \eqref{equation:quadratization_condition}). Gate-based optimizers can execute the HUBO without quadratization.

\paragraph{Experimental Results} The logarithmic encoding achieves TTS one to two orders of magnitude lower than one-hot on \textsc{D-Wave} \texttt{Advantage2\_system1.11} for graphs with $20$ nodes, with advantage demonstrating growth with problem size. This improvement persists even when post-quadratization logical qubits exceed the one-hot count, indicating that performance gains are not solely attributable to qubit savings. Our qubit analysis confirms that dense instances can exceed the one-hot count after Rosenberg reduction (consistent with Equation \eqref{equation:quadratization_condition}), while sparser instances retain substantial savings. Although physical qubit counts are comparable between encodings, considerably lower chain-length variance under logarithmic encoding likely drives its superior performance.

\paragraph{Generality} The formulation applies to any problem in the class defined in Section \ref{section:treating_the_general_case}; the only problem-specific quantity is the feasibility gap $\Delta_{\min}$. It is compatible with quantum annealing, QAOA, VQE, QITE, and GAS.

\paragraph{Limitations and Future Work} The sufficient penalty conditions may be conservative; tighter bounds could reduce Hamiltonian coefficient dynamic range and improve analog hardware performance \cite{Verma2022}. Alternative quadratization strategies could extend the qubit advantage to denser graphs. Empirical extension to larger instances and other problem classes (\textit{e.g.}, minimum $k$-cut, balanced partitioning) would further establish the approach's practical scope. Benchmarking using gate-based hardware and algorithms remains to be conducted. Finally, our lexicographic penalization system may be adapted to domain-wall encoding of variables; sufficient penalty analysis and benchmarking on quantum hardware remain.

\section*{Conflict of Interest Statement}

\scriptsize{VKM is a co-founder and shareholder of Menten AI, a peptide design company that uses advanced optimization strategies.}

\newpage

\bibliographystyle{IEEEtran}
\bibliography{IEEEabrv, main}

\end{document}